\newcommand{\FindTile}{\textsc{FindTile}\xspace}
\newcommand{\Mondrian}{\textsc{Stijl}\xspace}
\newcommand{\DB}{D}
\newcommand{\Model}{X}
\newcommand{\Models}{\mathcal{\Model}}
\newcommand{\by}[2]{{\ensuremath{#1\text{\nobreakdash-by\nobreakdash-}#2}}}
\newcommand{\set}[1]{\left\{#1\right\}}
\newcommand{\pr}[1]{\left(#1\right)}
\newcommand{\fpr}[1]{\mathopen{}\left(#1\right)}
\newcommand{\abs}[1]{{\left|#1\right|}}
\newcommand{\enpr}[2]{\pr{#1 ,\ldots , #2}}
\newcommand{\define}{\leftarrow}
\newcommand{\freq}[1]{\mathit{fr}\fpr{#1}}
\newcommand{\numrows}{N}
\newcommand{\numcols}{M}
\newcommand{\tree}[1]{\mathcal{#1}}
\newcommand{\children}[1]{\mathit{children}\fpr{#1}}
\newcommand{\tid}[1]{\mathit{tid}\fpr{#1}}
\newcommand{\ones}[1]{\mathit{p}\fpr{#1}}
\newcommand{\zeroes}[1]{\mathit{n}\fpr{#1}}
\newcommand{\cells}[1]{\mathit{area}\fpr{#1}}
\newcommand{\lc}[1]{\mathit{L}\fpr{#1}}
\newcommand{\hborders}[1]{\mathit{bh}\fpr{#1}}
\newcommand{\tborders}[1]{\mathit{bt}\fpr{#1}}
\newcommand{\hcands}[1]{\mathit{ch}\fpr{#1}}
\newcommand{\tcands}[1]{\mathit{ct}\fpr{#1}}
\newcommand{\hfreq}[1]{\mathit{hfr}\fpr{#1}}
\newcommand{\tfreq}[1]{\mathit{tfr}\fpr{#1}}
\newcommand{\cnt}[1]{\mathit{cnt}\fpr{#1}}
\newcommand{\cost}[1]{\mathit{cost}\fpr{#1}}
\newcommand{\ent}[1]{L\fpr{#1}}
\definecolor{yafaxiscolor}{rgb}{0.3, 0.3, 0.3}
\definecolor{yafcolor1}{rgb}{0.4, 0.165, 0.553}
\definecolor{yafcolor2}{rgb}{0.949, 0.482, 0.216}
\definecolor{yafcolor3}{rgb}{0.47, 0.549, 0.306}
\definecolor{yafcolor4}{rgb}{0.925, 0.165, 0.224}
\definecolor{yafcolor5}{rgb}{0.141, 0.345, 0.643}
\definecolor{yafcolor6}{rgb}{0.965, 0.933, 0.267}
\definecolor{yafcolor7}{rgb}{0.627, 0.118, 0.165}
\definecolor{yafcolor8}{rgb}{0.878, 0.475, 0.686}
\newlength{\yafaxispad}
\newlength{\yaftlpad}
\newlength{\yaflabelpad}
\newlength{\yafaxiswidth}
\newlength{\yafticklen}
\def\pgfplots@drawtickgridlines@INSTALLCLIP@onorientedsurf#1{}
\newcommand{\yafdrawxaxis}[2]{
	\pgfplotstransformcoordinatex{#1}\let\xmincoord=\pgfmathresult 
	\pgfplotstransformcoordinatex{#2}\let\xmaxcoord=\pgfmathresult 
	\pgfsetlinewidth{\yafaxiswidth} 
	\pgfsetcolor{yafaxiscolor}
	\pgfpathmoveto{\pgfpointadd{\pgfpointadd{\pgfplotspointrelaxisxy{0}{0}}{\pgfqpointxy{\xmincoord}{0}}}{\pgfqpoint{-0.5\yafaxiswidth}{\yafaxispad}}}
	\pgfpathlineto{\pgfpointadd{\pgfpointadd{\pgfplotspointrelaxisxy{0}{0}}{\pgfqpointxy{\xmaxcoord}{0}}}{\pgfqpoint{0.5\yafaxiswidth}{\yafaxispad}}}
	\pgfusepath{stroke}

}
\newcommand{\yafdrawyaxis}[2]{
	\pgfplotstransformcoordinatey{#1}\let\ymincoord=\pgfmathresult 
	\pgfplotstransformcoordinatey{#2}\let\ymaxcoord=\pgfmathresult 
	\pgfsetlinewidth{\yafaxiswidth} 
	\pgfsetcolor{yafaxiscolor}
	\pgfpathmoveto{\pgfpointadd{\pgfpointadd{\pgfplotspointrelaxisxy{0}{0}}{\pgfqpointxy{0}{\ymincoord}}}{\pgfqpoint{\yafaxispad}{-0.5\yafaxiswidth}}}
	\pgfpathlineto{\pgfpointadd{\pgfpointadd{\pgfplotspointrelaxisxy{0}{0}}{\pgfqpointxy{0}{\ymaxcoord}}}{\pgfqpoint{\yafaxispad}{0.5\yafaxiswidth}}}
	\pgfusepath{stroke}
}
\pgfplotsset{axis y line=left, axis x line=bottom,
	tick align=outside,
	compat = 1.3,
	tickwidth=\yafticklen,
	clip = false,
    x axis line style= {-, line width = 0pt, opacity = 0},
    y axis line style= {-, line width = 0pt, opacity = 0},
    x tick style= {line width = \yafaxiswidth, color=yafaxiscolor, yshift = \yafaxispad},
    y tick style= {line width = \yafaxiswidth, color=yafaxiscolor, xshift = \yafaxispad},
    x tick label style = {font=\scriptsize, yshift = \yaftlpad},
    y tick label style = {font=\scriptsize, xshift = \yaftlpad},
    every axis y label/.style = {at = {(ticklabel cs:0.5)}, rotate=90, anchor=center, font=\scriptsize, yshift = -\yaflabelpad},
    every axis x label/.style = {at = {(ticklabel cs:0.5)}, anchor=center, font=\scriptsize, yshift = \yaflabelpad},
    x tick label style = {font=\scriptsize, yshift = 1pt},
    grid = major,
    major grid style  = {dash pattern = on 1pt off 3 pt},
	every axis plot post/.append style= {line width=\yafaxiswidth} ,
	legend cell align = left,
	legend style = {inner sep = 1pt, cells = {font=\scriptsize}},
	legend image code/.code={%
		\draw[mark repeat=2,mark phase=2,#1] 
		plot coordinates { (0cm,0cm) (0.15cm,0cm) (0.3cm,0cm) };%
	} 
}
\newif\ifapx
\begin{document}

\title{Discovering Descriptive Tile Trees}
\subtitle{by Mining Optimal Geometric Subtiles}

\author{Nikolaj Tatti \and Jilles Vreeken}
\institute{Advanced Database Research and Modeling\\
Universiteit Antwerpen\\
\url{{nikolaj.tatti,jilles.vreeken}@ua.ac.be}}

\maketitle

\begin{abstract}
When analysing binary data, the ease at which one can interpret results is very important. Many existing methods, however, discover either models that are difficult to read, or return so many results interpretation becomes impossible. 
Here, we study a fully automated approach for mining easily interpretable models for binary data. We model data hierarchically with noisy tiles---rectangles with significantly different density than their parent tile. To identify good trees, we employ the Minimum Description Length principle. 

We propose \Mondrian, a greedy any-time algorithm for mining good tile trees from binary data. Iteratively, it finds the locally \emph{optimal} addition to the current tree, allowing overlap with tiles of the same parent. A major result of this paper is that we find the optimal tile in only $\Theta(\numrows\numcols\min(\numrows,\numcols))$ time. \Mondrian can either be employed as a top-$k$ miner, or by MDL we can identify the tree that describes the data best.

Experiments show we find succinct models that accurately summarise the data, and, by their hierarchical property are easily interpretable.

\end{abstract}

\section{Introduction}
\label{sec:intro}

When exploratively analysing a large binary dataset, being able to easily interpret the results is of utmost importance. Many data analysis methods, however, have trouble meeting this requirement. With frequent pattern mining, for example, we typically find overly many and highly redundant results, hindering interpretation~\cite{han:07:freqpat}. Pattern set mining~\cite{bringmann:07:chosen,vreeken:11:krimp,debie:11:dami} tackles these problems, and instead provides small and high-quality sets of patterns. However, as these methods generally exploit complex statistical dependencies between patterns, the resulting models are often difficult to fully comprehend.

When analysing 0--1 data, the encompassing question is `how are the 1s distributed?'. In this paper, we focus on the underlying questions of `where are the ones?' and `where are the zeroes?'\!. To answer these questions in an easily interpretable manner, we propose to model the data hierarchically, by identifying trees of tiles, i.e.\ sub-matrices that are surprisingly dense or sparse compared to their parent tile. 
As an example, consider Figure~\ref{fig:toyexample}, in which we show a toy example of a hierarchical tiling, and the corresponding tile tree. As the figure shows, tiles model parts of the data, and subtiles provide refinements over their parents. Next, as an example on real data, consider Figure~\ref{fig:paleo-example}, in which we show the tiling our algorithm discovered on paleontological data. Very easily read, using only $14$ tiles, the tiling shows which regions of the data are relatively dense (dark), as well as where relatively few $1$s are found (light).

\tikzstyle{tile} = [rounded corners = 2pt, inner sep = 0pt, fill opacity = 1, anchor = south west]
\tikzstyle{treetile} = [rounded corners = 2pt, inner sep = 4pt, fill opacity = 1]

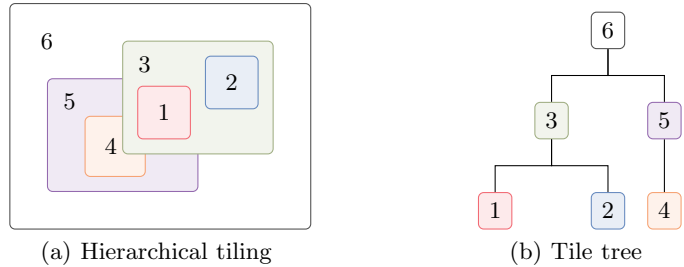
\begin{figure}[t!]
\begin{center}
\subfigure[Hierarchical tiling\label{fig:toy:data}]{
\begin{tikzpicture}
\node[tile, draw=black!70, fill=white, minimum height = 3cm, minimum width = 4cm]  at (0, 0) {};
\node[tile, draw=yafcolor1!70, fill=yafcolor1!10, minimum height = 1.5cm, minimum width = 2cm] at (0.5, 0.5) {};
\node[tile, draw=yafcolor2!70, fill=yafcolor2!10, minimum height = 0.8cm, minimum width = 0.8cm, align = left] at (1, 0.7) {$4$\ };
\node[tile, draw=yafcolor3!70, fill=yafcolor3!10, minimum height = 1.5cm, minimum width = 2cm] at (1.5, 1) {};
\node[tile, draw=yafcolor4!70, fill=yafcolor4!10, minimum height = 0.7cm, minimum width = 0.7cm] at (1.7, 1.2) {$1$};
\node[tile, draw=yafcolor5!70, fill=yafcolor5!10, minimum height = 0.7cm, minimum width = 0.7cm] at (2.6, 1.6) {$2$};
\node  at (0.5, 2.5) {$6$};
\node  at (0.8, 1.7) {$5$};
\node  at (1.8, 2.2) {$3$};
\end{tikzpicture}}\hspace{2cm}
\subfigure[Tile tree\label{fig:toy:tree}]{
\begin{tikzpicture}[level distance = 1.2cm]
	\node [treetile, draw=black!70, fill=white] {6}
		[edge from parent fork down] 
		child { node [treetile, draw=yafcolor3!70, fill=yafcolor3!10] {3}
			child { node [treetile, draw=yafcolor4!70, fill=yafcolor4!10] {1} }
			child { node [treetile, draw=yafcolor5!70, fill=yafcolor5!10] {2} }
		}
		child { node [treetile, draw=yafcolor1!70, fill=yafcolor1!10] {5}
			child { node [treetile, draw=yafcolor2!70, fill=yafcolor2!10] {4} }
		};
\end{tikzpicture}}
\end{center}
\vspace{-1.5em}
\caption{Toy example of a tiled database, and the corresponding tile tree structure}\label{fig:toyexample}
\end{figure}

\begin{figure}[b!]
\begin{center}%
\includegraphics[width=0.7\textwidth]{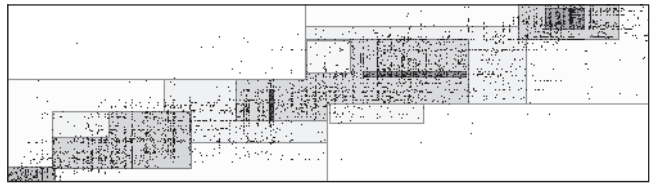}%
\end{center}%
\vspace{-1em}
\caption{Tiling of the {\em Paleo} dataset. See Fig.~\ref{fig:paleo-clean} for a  cleaned version without $1$s}\label{fig:paleo-example}
\end{figure}

Clearly, we aim to mine descriptions that are succinct, non-redundant, and neither overly complex nor simplistic. We therefore formalise the problem in terms of the Minimum Description Length (MDL) principle~\cite{grunwald:07:book}, by which we can automatically identify the model that best describes the data, without having to set any parameters. For mining good models, we introduce \Mondrian, a heuristic any-time algorithm that iteratively greedily finds the optimal subtile and adds it to the current tiling. A major result of this paper is that we show that we can find such optimal subtiles in only $\Theta(\numrows\numcols\min(\numrows,\numcols))$, as opposed to $\Theta(\numrows^2\numcols^2)$ when done naively~\cite{gionis:04:geometric}. 

We are not the first to study the problem of hierarchical tiling. The problem was first introduced by Gionis et al.~\cite{gionis:04:geometric}, whom proposed a randomised approach as an alternative to the naive approach. Our \FindTile procedure, on the other hand, is deterministic and identifies \emph{optimal} subtiles. Moreover, our MDL formalisation requires no scaling parameters, making the method parameter-free. 

These differences aside, both methods assume an order on the rows and columns of the data; as for such data, a subtile can be straightforwardly defined by a `from' and `to' selection query. As such, we exploit that the data is ordered, as this allows us to generate more easily understandable and easily visually representable models for the data. Although many datasets naturally exhibit such order, e.g. spatially and/or temporally, not all data does. For unordered data, e.g. through spectral ordering, good orders can be discovered~\cite{gionis:04:geometric, fortelius:06:spectral, tatti:11:order}. 

Experimentation on our method shows we discover easily interpretable models that describe the data very well. \Mondrian mines trees that summarise the data succinctly, with non-redundant tile trees that consist of relatively few tiles. 

The paper is organised as follows. Section~\ref{sec:encoding} discusses preliminaries. Section~\ref{sec:search} gives the \Mondrian algorithm for mining tile trees, and Section~\ref{sec:optimal} details mining optimal subtiles. We discuss related work in Section~\ref{sec:related}, and experiment in Section~\ref{sec:exps}. We round up with discussion and conclusions. Due to lack of space, we give the proofs for Propositions~\ref{prop:gain}--\ref{prop:inequality} in  
\ifapx 
Appendix~\ref{sec:apx}.
\else 
the Appendix.\!\footnote{http://adrem.ua.ac.be/stijl/}
\fi

\section{Encoding Data with Tile Trees}
\label{sec:encoding}

We begin by giving the basic definitions we use throughout the paper, after which we discuss how we can measure the quality of a hierarchical tile set.

\vspace{0.2em}
\textbf{Notation}
A \emph{binary dataset} $D$ is a binary matrix of size \by{\numrows}{\numcols} consisting
of $\numrows$ rows, binary vectors of size $\numcols$. We denote $(i, j)^{\textrm{th}}$ entry of $D$ by $D(i, j)$. We assume that both rows and columns have an order and, for simplicity, we assume that the indexing corresponds to the orders.


A geometric \emph{tile} $X = (a, b) \times (c, d)$, where $1 \leq a \leq b \leq \numrows$ and $1
\leq c \leq d \leq \numcols$, identifies a consecutive submatrix of $D$. 
In contrast, for combinatorial tiles, the rows and columns are not required to be consecutive. In this paper, we focus on geometric tiles.
We say that $X_1 = (a_1, b_1) \times (c_1, d_1)$ is a subtile of $X_2 = (a_2, b_2)
\times (c_2, d_2)$ if $X_1$ is completely covered by $X_2$, that is, $a_2 \leq
a_1$, $b_1 \leq b_2$, $c_2 \leq c_1$, and $d_1 \leq d_2$. We will write $(i, j)
\in X$ if $a \leq i \leq b$ and $c \leq j \leq d$. 

A \emph{tile tree} $\tree{T}$ is a tree of tiles such that each child of a tile
$X \in \tree{T}$ is a subtile of $X$. We will denote the children of $X$ by
$\children{X}$. In our setting, the order of the children matters, so we assume
that $\children{X}$ is a \emph{list} of tiles and not a set. We also assume
that the root tile always covers the whole data. Given a tile tree $\tree{T}$,
a tile $X \in \tree{T}$ and a subtile $Y$ of $X$, we will write $\tree{T} + X \to Y$
to denote a tile tree obtained by adding $Y$ as a last child of $X$. 

Our next step is to define which data entries are covered by which tile.
Since we allow child tiles to overlap, the definition is involved---although intuition is simple: the first most-specific tile that can encode a cell, encodes its value, and all other tiles ignore it. More formally, given a tile tree $\tree{T}$, consider a post-order, that is, an order where the
child tiles appear before their parents and such that if $\children{X} =
\enpr{Y_1}{Y_L}$, then $Y_i$ is appears before $Y_{i + 1}$. Let $X \in
\tree{T}$. 
We define $\tid{X ; \tree{T}}$ to be the position of $X$ in the post-order. When $\tree{T}$ is clear from the context we will simply write $\tid{X}$.
An example of the post-order is given in Figure~\ref{fig:toy:tree}.
Using this order we can define which entries belong to which tile.
We define
\[
	\cells{X; \tree{T}} = \set{(i, j) \in X \mid \text{there is no } Y \text{ with } (i, j) \in Y,  \tid{Y} < \tid{X} },
\]
that is, entries are assigned to the cells first-come first-serve, see Figure~\ref{fig:toy:data} as an example.
Among these entries, we define the number of $1$s and $0$s as
\[
\begin{split}
	\ones{X ; \tree{T}, D} = \abs{\set{(i, j) \in \cells{X ; \tree{T}} \mid D(i, j) = 1}}& \;\; \text{ and} \\
	\zeroes{X ; \tree{T}, D} = \abs{\set{(i, j) \in \cells{X ; \tree{T}} \mid D(i, j) = 0}}&\quad .
\end{split}
\]
Let us denote by $|\tree{T}|$ the number of tiles in a tree $\tree{T}$, i.e. $|\tree{T}|=|\{X \in \tree{T}\}|$, and denote by $\tree{T}_0$ the most simple tile tree consisting of only a root tile.


\vspace{0.2em}
\textbf{MDL for Tile Trees}
Our main goal is to find tile trees that summarise the data well; they should be  succinct yet highly informative on where the $1$s on the data are. 
We can formalise this intuition through the Minimum Description Length (MDL) principle~\cite{grunwald:07:book}, a practical version of Kolmogorov Complexity~\cite{vitanyi:93:book}. Both embrace the slogan {\em Induction by Compression}. The MDL principle can be roughly described as follows:
Given a dataset $\DB$ and a set of models $\Models$ for $\DB$, the best model $\Model \in \Models$ is the one that minimises
$L(\Model) + L(\DB \mid \Model)$
in which
$L(\Model)$ is the length, in bits, of the description of the model $\Model$, and
$L(\DB \mid \Model)$ is the length, in bits, of the data as described using $\Model$.

This is called two-part MDL, or {\em crude} MDL. This stands opposed to
{\em refined} MDL, where model and data are encoded together \cite{grunwald:07:book}.
We use two-part MDL because we are specifically interested in the model: the tile tree $\tree{T}^*$ that yields the minimal description length. Further, although refined MDL has stronger theoretical foundations, it cannot be computed except for some special cases~\cite{grunwald:07:book}. 
Before we can use MDL to identify good models, we will have to define how to encode a database given a tile tree, as well as how to encode a tile tree.

We encode the values of $\cells{X; \tree{T}}$ using prefix codes. The length of an optimal prefix code is given by Shannon entropy, i.e.\ $-\log P(\cdot)$, where $P(\cdot)$ is the probability of a value~\cite{cover:06:elements}. We have the optimal encoded length for all entries $\cells{X; \tree{T}}$ of a tile $X$ in a tile tree $\tree{T}$ as 
\[
	\lc{D \mid X, \tree{T}} = \ent{\ones{X; \tree{T}, D}, \zeroes{X; \tree{T}, D}},
\]
where $\ent{p, n} = - p \log \frac{p}{p+n} - n \log \frac{n}{p+n}$ is the scaled entropy.

In order to compare fairly between models, MDL requires the encoding to be lossless. Hence, besides the data, we also have to encode the tile tree itself.

We encode tile trees node per node, in reverse order, and add extra bits between the tiles to indicate the tree structure. We use a bit of value $1$ to indicate that the next tile is a child of the current tile, and $0$ to indicate that we have processed all child tiles of the current tile.
For example, the tree given in Figure~\ref{fig:toy:tree} is encoded, with $<$tile $ $$\mathit{i}$$>$ indicating an encoded tile, as
$$
\text{$<$tile $6$$>$}1
\text{$<$tile $5$$>$}1
\text{$<$tile $4$$>$}001
\text{$<$tile $3$$>$}1
\text{$<$tile $2$$>$}01
\text{$<$tile $1$$>$}000 \quad .
$$

To encode an individual tile, we proceed as follows. Let $X$ be a non-root tile and let
$Z = (a, b) \times (c, d)$ be the direct parent tile of $X$. As we know that $X$ is a subtile of $Z$, we know the end points for defining the area of $X$ fall within those of $Z$. As such, to encode the $4$ end points of $X$ we need only $4 \log(b - a + 1) + 4\log(d - c + 1)$ bits.

We also know that number of $1$s in $X$ are bounded by the area of $Z$, $(b - a + 1)(d - c + 1)$, and hence we can encode the number of $1$s in $X$ in $\log(b - a + 1) + \log(d - c + 1)$ bits. 
Note that although we can encode the number of $1$s more efficiently by using the geometry of $X$ instead of $Z$, this would introduce a bias to small tiles.

Next, to calculate the encoded size of a tile, we need to take the two bits for encoding the tree structure of $X$ into account. As describe above, one bit is used to indicate that $X$ has no more children and the other to indicate that $X$ is a child of $Z$. 
Putting this together, the encoded length of a non-root tile $X$ is
\[
	\lc{X \mid \tree{T}} = 1 + 1 + 5 \log(b - a + 1) + 5\log(d - c + 1)\quad.
\]
Let us now assume that $X$ is the root tile. Since we require that a root tile covers the whole data, we need to encode the dimensions of the data set, the number of $1$s in $X$, and following $1$ bit to indicate that all tiles have been processed. Unlike for the other tiles in the tree, we have no  upper bound for the dimensions of $X$, and therefore would have to use a so-called Universal Code~\cite{grunwald:07:book} to encode the dimensions---after which we could subsequently encode the number of $1$s in $X$ in $\log \numrows + \log \numcols$ bits. 
However, as the lengths of these codes are constant over all models for $D$, and  we can safely ignore them when selecting between models. 
For simplicity, for a root tile $X$ we therefore define $\lc{X \mid \tree{T}} = 0$.

As such, we have for the total encoded size of a database $D$ and a tile tree $\tree{T}$ \[
	\lc{D, \tree{T}} = \sum_{X \in \tree{T}} \lc{X \mid \tree{T}} + \lc{D \mid X, \tree{T} } \quad ,
\]
by which we now have a formal MDL score for tile trees.

\section{Mining Good Tile Trees}
\label{sec:search}
Now that we have defined how to encode data with a tile tree, our next step is to find the best tile tree, i.e. the tile tree minimising the total encoded length. That is, we want to solve the following problem.

\begin{problem}[Minimal Tile Tree]
Given a binary dataset $D$, find a tile tree $\tree{T}$ such that the total encoded size, $\lc{D, \tree{T}}$, is minimised.
\end{problem}

As simply as it is stated, this minimisation problem is rather difficult to solve. Besides that the search space of all possible tile trees is rather vast, the total encoded size $\lc{D, \tree{T}}$ does not exhibit trivial structure that we can exploit for fast search, e.g.
(weak) monotonicity.
Hence, we resort to heuristics. 

For finding an approximate solution to the Minimal Tile Tree problem, we propose the \Mondrian algorithm.\!
\footnote{named after the art movement De Stijl, to which art our models show resemblance.}
We give the pseudo-code as Algorithm~\ref{alg:mondrian}. We iteratively find that subtile $Y$ of a tile $X \in \tree{T}$ by which the total encoded size is minimised. We therefore refer to $Y$ as the optimal subtile of $X$. After identifying the optimal subtile, \Mondrian adds $Y$ into the tile tree, and continues inductively until no improvement can be made.

Alternative to this approach, we can also approximate the optimal $k$-tile tree. To do so, we adapt the algorithm to find the subtile $Y$ over all parent tiles $X \in \tree{T}$ that minimises the score---as opposed to our standard depth-first strategy. Note that by the observation above, for the $k$ at which the score is minimised, both strategies find the same tree.

\begin{algorithm}[tb!]
\caption{\Mondrian$(D, \tree{T}, X)$}
\label{alg:mondrian}
\Input{dataset $D$, current tile tree $\tree{T}$, parent tile $X$}
\Output{updated tile tree $\tree{T}$}

$Y \define $ subtile of $X$ minimising $\lc{D, \tree{T} + X \to Y}$\;

\While {$\lc{D, \tree{T} + X \to Y} < \lc{D, \tree{T}}$} {
	$\tree{T} \define \Mondrian(D, \tree{T} + X \to Y, Y)$\;
	$Y \define $ subtile of $X$ minimising $\lc{D, \tree{T} + X \to Y}$\;
}
\Return{$\tree{T}$}\;
\end{algorithm}

By employing a greedy heuristic, we have reduced the problem of finding the optimal tile tree into a problem of finding the optimal subtile.

\begin{problem}[Minimal Subtile]
\label{prob:subtile}
Given a binary dataset $D$, a tile tree $\tree{T}$, and a tile $X \in \tree{T}$,
find a tile $Y$ such that $Y$ is a subtile of $X$, and $\tree{T} + X \to Y$ is minimised.
\end{problem}

The main part of this paper details how to find an optimal subtile, a procedure  we subsequently use in \Mondrian.

\section{Finding the Optimal Tile}
\label{sec:optimal}
In this section we focus on finding the optimal subtile. Naively, we solve this by simply testing every possible subtile, requiring $\Theta(\numrows^2\numcols^2)$ tests, where $\numrows$ and $\numcols$ are the number of rows and columns in the parent tile, respectively~\cite{gionis:04:geometric}.

In this section, we present an algorithm that can find the optimal subtile in
$\Theta(\numrows^2\numcols)$.  In order to do that, we will break the problem into two
subproblems. The first problem is that for two \emph{fixed} integers $c \leq
d$, we need to find two integers $a \leq b$ such that the tile $(a, b) \times
(c, d)$ is optimal. Once we have solved this, we can proceed to find the optimal
tile by finding the optimal $(c, d)$.

We begin by giving an easier formulation of the function we want to optimise.
In order to do so, note that adding a subtile $Y$ to $X$ changes only
$\cells{T; \tree{X}}$, and does not influence (the encoded length of) other tiles in the tree. Hence, we expect to be able to express the difference in total encoded length between $\tree{T} + X \to Y$ and $\tree{T}$ in simple terms. In fact, we have the following theorem.

\begin{proposition}
\label{prop:gain}
Let $\tree{T}$ be a tile tree. Let $X \in \tree{T}$ be a tile and let $Y$ be a
subtile of $X$.  Define $\tree{T}' = \tree{T} + X \to Y$ and have $u = \ones{Y;
\tree{T}'}$,  $v = \zeroes{Y; \tree{T}'}$, $o = \ones{X; \tree{T}}$, and  $z =
\zeroes{X; \tree{T}}$. Then
\[
	\lc{D, \tree{T}'} - \lc{D, \tree{T}} = \ent{u, v} + \ent{o - u, z - v} - \ent{o, z} + \lc{Y \mid \tree{T}'}\quad.
\]
\end{proposition}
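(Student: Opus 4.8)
The plan is to pin down exactly which tiles have their cell assignments (and code lengths) altered when $Y$ is appended as the last child of $X$, and to show that only $X$ and $Y$ are affected. First I would record the effect on the post-order: $\tree{T}' = \tree{T} \cup \set{Y}$, the relative order of the old tiles is unchanged, and $Y$ receives the position directly before $X$. In particular, in $\tree{T}'$ we have $\tid{W} < \tid{Y} < \tid{X}$ for every old tile $W$ with $\tid{W} < \tid{X}$ in $\tree{T}$ — this includes all old descendants of $X$, since $Y$ is the \emph{last} child — while $\tid{W} > \tid{X} > \tid{Y}$ for every old tile $W$ with $\tid{W} > \tid{X}$ in $\tree{T}$.

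Next I would determine $\cells{Y;\tree{T}'}$ and $\cells{X;\tree{T}'}$. Since $Y \subseteq X$ and the tiles of $\tree{T}'$ with smaller $\tid{}$ than $Y$ are precisely the tiles of $\tree{T}$ with smaller $\tid{}$ than $X$, the definition of $\cells{}$ gives $\cells{Y;\tree{T}'} = Y \cap \cells{X;\tree{T}}$ and $\cells{X;\tree{T}'} = \cells{X;\tree{T}} \setminus Y$. Hence $\cells{X;\tree{T}}$ is the disjoint union of $\cells{X;\tree{T}'}$ and $\cells{Y;\tree{T}'}$, and counting ones and zeroes across this partition yields $\ones{X;\tree{T}'} = o - u$ and $\zeroes{X;\tree{T}'} = z - v$. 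Therefore $\lc{D \mid X, \tree{T}'} = \ent{o-u, z-v}$, while $\lc{D \mid Y, \tree{T}'} = \ent{u,v}$ and $\lc{D \mid X, \tree{T}} = \ent{o,z}$.

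The crux is to show that every other tile $W \in \tree{T} \setminus \set{X}$ is undisturbed, i.e. $\cells{W;\tree{T}'} = \cells{W;\tree{T}}$, and that moreover $\lc{W \mid \tree{T}'} = \lc{W \mid \tree{T}}$ for every $W \in \tree{T}$ (including $X$). The code-length equalities are immediate, since $\lc{W \mid \cdot}$ depends only on the geometry of $W$'s parent and the two constant structural bits, and appending a leaf under $X$ changes no tile's parent or geometry. For the cells: if $\tid{W} < \tid{X}$ in $\tree{T}$, then $\cells{W}$ depends only on strictly earlier tiles, and $Y$ is not among them by paragraph one; if $\tid{W} > \tid{X}$ in $\tree{T}$, then no cell of $W$ lies in $X$ — because any cell of $X$ is claimed by a tile of $\tid{}$ at most $\tid{X}$ (namely $X$ itself, at the latest) — hence no cell of $W$ lies in $Y \subseteq X$, so inserting $Y$ cannot alter $\cells{W}$. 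I expect this last observation — that a cell inside $X$ can never be owned by a tile appearing after $X$ in the post-order — to be the only delicate point, and it is precisely what confines the change to $X$ and $Y$.

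Finally I would expand $\lc{D,\tree{T}'} - \lc{D,\tree{T}} = \sum_{W \in \tree{T}'} \pr{\lc{W \mid \tree{T}'} + \lc{D \mid W, \tree{T}'}} - \sum_{W \in \tree{T}} \pr{\lc{W \mid \tree{T}} + \lc{D \mid W, \tree{T}}}$, cancel all summands for $W \notin \set{X,Y}$ and the two $\lc{X \mid \cdot}$ terms using the previous paragraph, and substitute the three entropy expressions from paragraph two together with the surviving $\lc{Y \mid \tree{T}'}$. This leaves exactly $\ent{u,v} + \ent{o-u, z-v} - \ent{o,z} + \lc{Y \mid \tree{T}'}$, as claimed.
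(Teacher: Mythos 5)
Your proof is correct and follows essentially the same route as the paper's: track how the post-order shifts when $Y$ is inserted directly before $X$, conclude that only $\cells{X}$ and $\cells{Y}$ change while all other tiles' cells and code lengths are preserved, and then cancel terms in the sum. You actually supply more detail than the paper does on the one point it asserts without argument (why tiles after $X$ in the post-order are unaffected, via the observation that their cells cannot lie inside $X$), which is a welcome addition rather than a deviation.
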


In order to find the optimal subtile it is enough to create an algorithm for finding an optimal subtile more dense than its parent tile. To see this, note that we can find the optimal tile by first finding the optimal \emph{dense} tile, and then find the optimal \emph{sparse} tile by applying the same algorithm on the $0$--$1$ inverse of the data. Once we have both the optimal optimal dense and optimal sparse tiles, we can choose the overall optimal subtile by MDL.

Let $X = (s, e) \times (x, y)$ be a tile, and $\tree{T}$ a tile tree with $X \in \tree{T}$. 
Assume that we are given indices $c$ and $d$.
Our goal in this section is to find those indices $a$ and $b$ such that $Y = (a, b) \times (c, d)$ is an optimal subtile of $X$.

Define two vectors, $p$ for positives and $n$ for negatives, each of length $e - s + 1$, that contain the number of $1$s and $0$s respectively, within the $i$th row of $X$,  $p_i = \abs{\set{(i + s - 1, w) \in \cells{X} \mid c \leq w \leq d, D(i + s - 1, w) = 1}}$, 
and $n_i = \abs{\set{(i + s - 1, w) \in \cells{X} \mid c \leq w \leq d, D(i + s - 1, w) = 0}}$. 

This allows us to define $\cnt{a, b; p} = \sum_{i = a}^{b} p_i$ (and similarly
for $n$).  Let $u = \cnt{a, b; p}$ and $v = \cnt{a, b; n}$.  It follows that
$\ones{Y; \tree{T} + X \to Y} = u$ and $\zeroes{Y; \tree{T} + X \to Y} = v$,
where $Y = (a, b) \times (c, d)$; those are the entries of $X$ now to be
encoded by $Y$. Let us define $\cost{a, b; p, n, o, z} = \ent{u, v} + \ent{o -
u, z - v}$. We will write $\cost{a, b}$, when $p$, $n$, $o$, $z$ are known from
the context. Proposition~\ref{prop:gain} states that minimising
$\lc{D, \tree{T}'}$ is equivalent to minimising $\cost{a, b}$.

Further, let us define $\freq{a, b; p, n} = \cnt{a, b; p} / (\cnt{a, b; p} +
\cnt{a, b; n})$ to be the \emph{frequency} of $1$s within $Y$.
Proposition~\ref{prop:gain} then allows us to formulate the optimisation
problem as follows.

\begin{problem}[Minimal Border Points]
\label{prob:border}
Let $p$ and $n$ be two integer vectors of the same length, $m$. Let $o$ and $z$
be two integers such that $\cnt{1, m; p} \leq o$ and $\cnt{1, m; n} \leq z$.
Find $1 \leq a \leq b \leq n$ such that $\freq{a, b; p, n} >  o / (o + z)$
and that $\cost{a, b}$ is minimised.
\end{problem}

The rest of the section is devoted to solving this optimisation problem.
Naively we could test every pair $(a, b)$, which however requires quadratic
time. Our approach is to ignore a large portion of suboptimal pairs, such
that our search becomes linear.

To this end,
let $p$ and $n$ be two vectors, and let $1 \leq b \leq \abs{p}$ be an
integer.  We say that $a \leq b$ is a \emph{head border} of $b$ if there are no
integers $i$ and $j$ such that $1 \leq i < a \leq j \leq b$ and $\freq{i, a - 1} \geq \freq{a, b}$.
Similarly, we say that $b \geq a$ is a \emph{tail border} of $a$ if there are
no indices $a \leq i \leq b < j \leq \abs{p}$ such that $\freq{i, b} \leq
\freq{b + 1, j}$.  We denote the list of all head borders by $\hborders{b, p, n}$
and the list of all tail borders by $\tborders{a, p, n}$.

Given a head border $a$ of $b$, we say that $a$ is a \emph{head candidate} if
there are no indices $1 \leq i < a \leq j \leq b$ such that $\freq{i, a - 1} \geq
\freq{j, b}$.
Similarly, we say that $b \in \tborders{a}$ is a \emph{tail candidate} of $a$
if there are no indices $a \leq i \leq b < j \leq \abs{p}$ such that $\freq{a, i}
\leq \freq{b + 1, j}$.
We denote the
list of all head candidates by $\hcands{b, p, n}$ and the list of all tail
candidates by $\tcands{a, p, n}$.

To avoid clutter, we do not write $p$ and $n$ wherever clear from context.

\begin{figure}[tb!]
\begin{center}
\subfigure[\label{fig:borders}]{
\begin{tikzpicture}
\node[tile, draw=yafcolor3, fill=yafcolor3!10, minimum height = 0.3cm, minimum width = 1cm, rounded corners = 0pt] at (0, 0) {};
\node[tile, draw=yafcolor3, fill=yafcolor3!30, minimum height = 1cm, minimum width = 1.3cm, rounded corners = 0pt] at (1, 0) {};
\node[tile, draw=yafcolor3, fill=yafcolor3!20, minimum height = 0.6cm, minimum width = 1cm, rounded corners = 0pt] at (2.3, 0) {};
\node[tile, draw=yafcolor3, fill=yafcolor3!60, minimum height = 2cm, minimum width = 1.2cm, rounded corners = 0pt] at (3.3, 0) {};
\node[tile, draw=yafcolor3, fill=yafcolor3!15, minimum height = 0.4cm, minimum width = 1.4cm, rounded corners = 0pt] at (4.5, 0) {};

\node[anchor = north] at (0, 0) {$a_1$};
\node[anchor = north] at (1, 0) {$a_2$};
\node[anchor = north] at (2.3, 0) {$a_3$};
\node[anchor = north] at (3.3, 0) {$a_4$};
\node[anchor = north] at (4.5, 0) {$a_5$};
\node[anchor = north] at (5.9, 0) {$a_6$};

\node[tile, draw=yafcolor3, fill=yafcolor3!10, minimum height = 2.5cm, minimum width = 0cm, rounded corners = 0pt] at (0, 0) {};
\node[anchor = south west, rotate = 90] at (0, 0.5) {frequency};

\end{tikzpicture}}\hspace{1cm}
\subfigure[\label{fig:inequality}]{
\begin{tikzpicture}
\node[tile, draw=black!70, fill=white, minimum height = 3cm, minimum width = 3.1cm]  at (0, 0) {}; 
\draw[dashed, draw=black!70] (0.8, 0) node [anchor = north] {$c$} -- (0.8, 3);
\draw[dashed, draw=black!70] (2.3, 0) node [anchor = north] {$d$} -- (2.3, 3);
\draw[dashed, draw=black!70] (0, 0.6) -- (3.1, 0.6) node [anchor = west] {$b$};
\draw[dashed, draw=black!70] (0, 1.2) -- (3.1, 1.2) node [anchor = west] {$a'$};
\draw[dashed, draw=black!70] (0, 1.8) -- (3.1, 1.8) node [anchor = west] {$i$};
\draw[dashed, draw=black!70] (0, 2.4) -- (3.1, 2.4) node [anchor = west] {$a$};
\node[tile, draw=yafcolor1!70, fill=yafcolor1!60, minimum height = 0.6cm, minimum width = 1.5cm] at (0.8, 0.6) {};
\node[tile, draw=yafcolor1!70, fill=yafcolor1!10, minimum height = 0.6cm, minimum width = 1.5cm] at (0.8, 1.2) {};
\node[tile, draw=yafcolor1!70, fill=yafcolor1!30, minimum height = 0.6cm, minimum width = 1.5cm] at (0.8, 1.8) {};
\end{tikzpicture}}
\end{center}
\vspace{-1.5em}
\caption{Example of how \FindTile considers head candidates. (a) $a_3$ is no
head border for $a_4 - 1$, as $\freq{a_2,a_3-1} > \freq{a_3,a_4-1}$. Although a
head border for $a_6 - 1$, $a_4$ is not a head candidate for $a_6 - 1$, as
$\freq{a_3,a_4-1}>\freq{a_5,a_6-1}$. (b) Proposition~\ref{prop:inequality} states that we can ignore $i$ as head
candidate for a tile to $j$, as by $\freq{u',i-1}>\freq{i,u-1}$ we know
$\freq{u',j-1}>\freq{i,j-1}$. 
}
\end{figure}
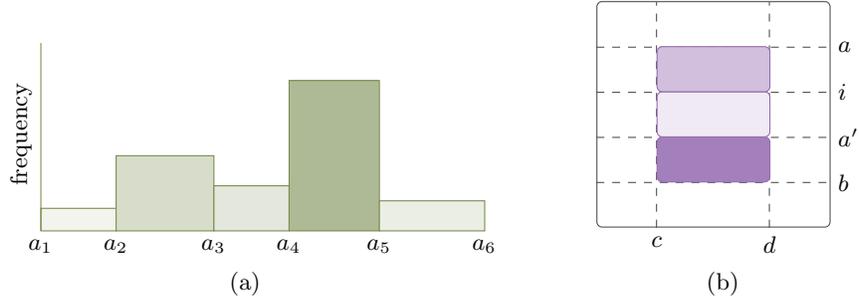

As an example, consider Figure~\ref{fig:borders}. Since $\freq{a_2, a_3 - 1} > \freq{a_3, a_4 - 1}$,
it follows that $a_3 \notin \hborders{a_4 - 1}$. Note that $a_4 \in \hborders{a_6 - 1}$ but 
$a_4 \notin \hcands{a_6 - 1}$ since $\freq{a_2, a_4 - 1} > \freq{a_5, a_6 - 1}$.

We are now ready to state the main result of this section: in order to find the optimal
tile we need to only study head and tail candidates.

\begin{proposition}
\label{prop:inequality}
Let $p$ and $n$ be two vectors and let $o$ and $z$ be two integers.
Let $i\leq j$ be two indices such that $\freq{i, j} > o / (o + z)$.
Then there are $a \leq b$ such that $\cost{a, b} \leq \cost{i, j}$, $a \in \hcands{b}$ and $b \in \tcands{a}$. 
\end{proposition}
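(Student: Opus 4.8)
\noindent\emph{Proof idea.}
The plan is to show that, starting from the interval $[i,j]$, one can reach a pair $[a,b]$ with $a\in\hcands{b}$ and $b\in\tcands{a}$ through a finite chain of endpoint moves, none of which raises the cost or pushes the frequency below $\rho:=o/(o+z)$. By Proposition~\ref{prop:gain} --- the term $\lc{Y\mid\tree{T}'}$ depends only on the dimensions of $Y$, which stay fixed inside the inner minimisation of Problem~\ref{prob:border} --- it suffices to track $\cost{a,b}=\ent{u,v}+\ent{o-u,z-v}$ with $(u,v)=(\cnt{a,b;p},\cnt{a,b;n})$, and by hypothesis $[i,j]$ already satisfies $\cost{i,j}<\infty$ and $\freq{i,j}>\rho$.

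The workhorse is an \emph{exchange lemma} with four symmetric forms; the prototype is: if $i'<a$, $\freq{i',a-1}\ge\freq{a,b}$ and $\freq{a,b}>\rho$, then $\cost{i',b}\le\cost{a,b}$ and $\freq{i',b}>\rho$. To prove it I would regard the move $[a,b]\to[i',b]$ as continuously transferring the boundary block $\beta=[i',a-1]$ out of $X\setminus Y$ into $Y$ at rate $t\in[0,1]$ and differentiate the cost along the path. Using $\partial_p\ent{p,n}=-\log\tfrac{p}{p+n}$, $\partial_n\ent{p,n}=-\log\tfrac{n}{p+n}$ and writing cross-entropy as entropy plus $\mathit{KL}$, the derivative collapses to
\[
 (p_\beta+n_\beta)\bigl(\kl{f_\beta}{f_{\mathrm{tile}}(t)}-\kl{f_\beta}{f_{\mathrm{cpl}}(t)}\bigr),
\]
where $f_\beta$ is the frequency of $\beta$ and $f_{\mathrm{tile}}(t),f_{\mathrm{cpl}}(t)$ are the running frequencies of $Y$ and of its complement in $X$. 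Since $f_\beta\ge\freq{a,b}>\rho$, along the whole path $f_{\mathrm{tile}}(t)\in[\freq{a,b},f_\beta]$ while $f_{\mathrm{cpl}}(t)<\rho$; hence $f_{\mathrm{cpl}}(t)<\rho<f_{\mathrm{tile}}(t)\le f_\beta$, and as $y\mapsto\kl{f_\beta}{y}$ is decreasing on $(0,f_\beta]$ the bracket is $\le0$. Thus the cost never increases, and $\freq{i',b}>\rho$ because it is a mediant of two numbers above $\rho$. (The concavity of $\ent{\cdot,\cdot}$ invoked here is the same superadditivity that drives Proposition~\ref{prop:gain}.) Swapping the roles of $Y$ and its complement, and of left and right, gives the extend-right, shrink-left and shrink-right variants.

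With the lemma in hand I would first prove a \emph{border} version of the statement. Fixing $b=j$, as long as $a$ is not a head border of $b$ there is $i'<a$ with $\freq{i',a-1}\ge\freq{a,b}$; replace $a$ by $i'$ and repeat --- each step strictly lowers $a$, so after finitely many steps $a$ is a head border of the current $b$, with cost non-increased and frequency above $\rho$. Freezing that $a$, run the mirrored sweep on the right until $b$ is a tail border of $a$; this is compatible with the first sweep since every rightward extension appends a block denser than the current tile, so $\freq{a,b}$ only grows and a head border of the old $b$ stays a head border of the new $b$. This gives $[a,b]$ with $a\in\hborders{b}$, $b\in\tborders{a}$, $\freq{a,b}>\rho$ and $\cost{a,b}\le\cost{i,j}$. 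To promote borders to candidates, suppose $a\notin\hcands{b}$, witnessed by $i'<a\le j'\le b$ with $\freq{i',a-1}\ge\freq{j',b}$; the head-border property forces $j'>a$, so $\freq{j',b}\le\freq{i',a-1}<\freq{a,b}$ and hence $\freq{a,j'-1}>\freq{a,b}>\rho$. I would then shrink $b$ to $j'-1$: Figure~\ref{fig:inequality} is exactly the mediant computation needed here, showing that discarding the sparser suffix $[j',b]$ does not raise the cost (shrink-right lemma) and that the new tile is denser, so the head-border status survives; because $\freq{c,b}$ lies between $\freq{c,j'-1}$ and $\freq{j',b}$ while $\freq{i',a-1}\ge\freq{j',b}$, the left endpoint situation is no worse. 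Iterating this, together with the symmetric clean-up of the right endpoint, one argues termination and arrives at a pair that is simultaneously a head candidate and a tail candidate.

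\textbf{The main obstacle} is precisely this last promotion: showing that enforcing the candidate condition at one endpoint cannot undo it at the other, and handling the non-generic sub-cases of the shrink move --- when the discarded block's frequency sits between the complement's frequency and $\rho$ --- where the naive shrink need not lower the cost and one must instead use a different exchange move or appeal to extremality. I expect the cleanest packaging is to fix, among all frequency-feasible intervals of cost at most $\cost{i,j}$, a lexicographically extremal one and argue that each candidate violation contradicts extremality via a single application of the exchange lemma; the entropy/$\mathit{KL}$ inequalities themselves are routine.
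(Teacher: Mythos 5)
Your exchange lemma and its KL-derivative proof are sound as far as they go, and they are essentially a re-derivation of the paper's linearisation bound: the function $h$ in Lemma~\ref{lem:inequality} is exactly the cross-entropy majorant of $g$ at $(u,v)$ whose directional derivative you compute, and your monotonicity-of-$\kl{f_\beta}{\cdot}$ step is the paper's monotonicity of $f(x,n)$ in $x$. The genuine gap is that your lemma is a \emph{one-sided} implication under a hypothesis that the border and candidate violations do not supply. A violation of the head-border condition yields indices $i'<a\le j'\le b$ with $\freq{i',a-1}\ge\freq{a,j'}$ for \emph{some} $j'$, not $\freq{i',a-1}\ge\freq{a,b}$; so your first sweep, which fires only when the outside block dominates the frequency of the \emph{whole} tile, terminates at an interval that need not be a head border at all. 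In the regime $\freq{a,j'}\le\freq{i',a-1}<\freq{a,b}$ your derivative argument breaks: the running tile frequency now moves away from $f_\beta$, the two KL terms sit on opposite sides of the minimiser, and the extension can genuinely increase the cost. Symmetrically, as you note yourself, the shrink that discards the offending prefix need not decrease the cost when its frequency lies between the complement frequency and $o/(o+z)$. Since neither move is unconditionally safe, no single exchange move can drive the iteration. What is needed, and what Lemma~\ref{lem:inequality} supplies, is a \emph{dichotomy}: with $s$ the counts of the offending prefix $[i,y]$ and $t$ the counts of the outside block $[x,i-1]$ satisfying $\freq{x,i-1}\ge\freq{i,y}$, \emph{either} dropping $s$ strictly lowers the cost \emph{or} absorbing $t$ does not raise it. Its proof is one line on top of your own machinery: if dropping $s$ does not strictly help, the linearised gain $f(w,m)$ of that drop is $\le 0$, and monotonicity of $f$ in the frequency argument transfers the sign to the denser block $t$. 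Your ``main obstacle'' paragraph correctly locates this difficulty but does not resolve it.

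Two secondary problems. First, your claim that rightward extensions preserve the head-border property because ``every rightward extension appends a block denser than the current tile'' is unjustified: the tail-border violation only guarantees $\freq{b+1,j''}\ge\freq{i'',b}$ for some $i''\ge a$, and even a globally dense appended block may begin with a sparse prefix that creates a new witness $\freq{i,a-1}\ge\freq{a,j}$ for some $j>b$ --- Lemma~\ref{lem:stop} exists precisely because extending $b$ can destroy head borders. Second, once both horns of the dichotomy are in play the process no longer monotonically moves one endpoint, so termination requires the paper's two-part potential (each step either strictly decreases the finitely-valued cost or moves the left endpoint leftward), not the ``$a$ strictly decreases'' argument. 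Your closing suggestion --- fix a cost-minimal, lexicographically extremal feasible interval and derive a contradiction from any candidate violation --- does work, but only once the dichotomy lemma is available to produce, from each violation, an interval that is either strictly cheaper or equally cheap and lexicographically smaller; that is in substance the paper's proof.
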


Proposition~\ref{prop:inequality} is illustrated in
Figure~\ref{fig:inequality}.  Since $\freq{a, i - 1} \geq \freq{i, a'}$, we
know that $i \notin \hcands{b}$.  Proposition~\ref{prop:inequality} implies
that we can safely ignore $(i, b)$ and consider instead $(a', b)$ or $(a, b)$.

Proposition~\ref{prop:inequality} states that we need to only study candidates, a subset of borders.
Fortunately, there exists an efficient algorithm to construct a border list
$\hborders{b + 1}$ given the existing list $\hborders{b}$~\cite{calders:07:mining}. The approach relies on several lemmata.

Let $\enpr{a_1}{a_K} = \hborders{b}$.  We claim that
$\hborders{b + 1} \subseteq \enpr{a_1}{a_K, b + 1}$.

\begin{lemma}
\label{lem:bordprevious}
If $a \leq b$ and  $a \notin \hborders{b}$, then $a \notin \hborders{b + 1}$.
\end{lemma}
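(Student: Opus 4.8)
The plan is to prove the contrapositive in spirit: assuming $a \notin \hborders{b}$, exhibit witnesses showing $a \notin \hborders{b+1}$. By the definition of head border, $a \notin \hborders{b}$ means there exist indices $i$ and $j$ with $1 \le i < a \le j \le b$ and $\freq{i, a-1} \ge \freq{a, j}$. (Here I read the definition of head border, restated for an arbitrary right endpoint, so that non-membership gives exactly such a pair; note $j \le b$.) The goal is to produce indices $i'$ and $j'$ with $1 \le i' < a \le j' \le b+1$ and $\freq{i', a-1} \ge \freq{a, j'}$, which would certify $a \notin \hborders{b+1}$.

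The natural choice is $i' = i$ and $j' = j$: since $j \le b < b+1$, the range constraint $1 \le i < a \le j \le b+1$ is immediately satisfied, and the frequency inequality $\freq{i, a-1} \ge \freq{a, j}$ is literally the one we already have. So the only thing to check is that the definition of ``head border'' is genuinely monotone in the right endpoint in the sense needed — i.e.\ that a witness against $a$ for endpoint $b$ is also a valid witness against $a$ for endpoint $b+1$ — which is clear because enlarging $b$ only enlarges the set of admissible $j$'s and does not remove any. Thus the same pair $(i,j)$ works.

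The one subtlety to be careful about is the exact form of the head-border definition as stated in the text, which is phrased with a fixed $b$ as ``there are no integers $i$ and $j$ such that $1 \le i < a \le j \le b$ and $\freq{i, a-1} \ge \freq{a, b}$'' — note the inequality there compares against $\freq{a,b}$, not $\freq{a,j}$. If one takes this literally, negation of $a \in \hborders{b}$ gives $i < a \le j \le b$ with $\freq{i, a-1} \ge \freq{a, b}$; I would then need to transfer this to endpoint $b+1$. The cleanest route is to observe that such a witness forces the frequency profile to ``dip'' at $a$ relative to some earlier block, and that this dip is preserved: either $\freq{a, b+1} \le \freq{a, b}$, in which case $\freq{i,a-1} \ge \freq{a,b} \ge \freq{a,b+1}$ and $(i, b+1)$ is a witness; or $\freq{a, b+1} > \freq{a, b}$, in which case row $b+1$ is denser than the block $(a,b)$, so $\freq{a+?,b+1}$... — actually the simpler observation is that in the second case we may instead keep $j = b$ and use the pair $(i, b)$ with endpoint $b+1$, which is still admissible since $b \le b+1$. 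So a short case split handles it.

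I expect the main (very mild) obstacle to be purely notational: pinning down which of the two essentially-equivalent readings of ``head border'' is intended, and making sure the witness indices stay within the legal ranges $1 \le i' < a \le j' \le b+1$ after the transfer. There is no real combinatorial content here; the lemma is the base monotonicity fact that licenses the incremental border-list construction, and the proof is a one-line range argument once the definition is unwound correctly.
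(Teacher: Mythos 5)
Your main argument is exactly the paper's proof: the witness pair $(i,j)$ certifying $a \notin \hborders{b}$ remains a valid witness for $b+1$, since enlarging the right endpoint only enlarges the admissible range $1 \le i < a \le j \le b+1$. The paper's own proof uses the $\freq{i,a-1} \ge \freq{a,j}$ reading of the definition (the $\freq{a,b}$ in the definition's statement is evidently a typo, as $j$ would otherwise be vestigial), so your side case-split for the literal reading is unnecessary — and, as you half-suspected, its second branch would not actually go through, since under that reading the comparison target would become $\freq{a,b+1}$ regardless of which $j$ you keep.
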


\begin{proof}
By definition there are $i$ and $j$ such that 
$1 \leq i < a \leq j \leq b$ such that $\freq{i, a - 1} \geq \freq{a, j}$. These indices
are valid for $b + 1$, hence $a \notin \hborders{b + 1}$.
\qed\end{proof}

Hence, in order to construct $\hborders{b + 1}$, we only need to delete entries from $\enpr{a_1}{a_K, b + 1}$.
Let us define a \emph{head frequency} $\hfreq{b} = \max_{i \leq b} \freq{i, b}$
and a \emph{tail frequency} $\tfreq{a} = \max_{a \geq i} \freq{a, i}$.
The following two lemmata 
say that
the last entry in $\hborders{b + 1}$ has to be the smallest index $j$ such that
$\freq{j, b} = \hfreq{b}$, and that the borders of $b + 1$ smaller than $j$ are
all included in $\hborders{b}$.

\begin{lemma}
\label{lem:maximal}
Let $j$ be the smallest index s.t. $\freq{j, b} = \hfreq{b}$. Then
$j = \max \hborders{b}$.
Let $j$ be the largest index s.t. $\freq{a, j} = \tfreq{a}$. Then
$j = \min \tborders{a}$.
\end{lemma}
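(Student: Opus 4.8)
I will prove the statement about head borders; the one about tail borders is obtained from it by reversing the order of the indices (i.e.\ applying the head statement to the vectors $p,n$ read backwards), so I only discuss the head case. The single ingredient I will set up first is the elementary \emph{averaging} property of frequencies: for $a\le k<c$ (with all intervals carrying at least one cell), $\freq{a,c}$ equals the mediant of $\freq{a,k}$ and $\freq{k+1,c}$, hence lies weakly between them, and strictly between them whenever $\freq{a,k}\neq\freq{k+1,c}$. This is immediate from $\cnt{\cdot\,;p}$ and $\cnt{\cdot\,;n}$ being additive over the split, and it is the only tool used below.

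Unfolding the definition, $a\in\hborders{b}$ holds exactly when there is no pair $i<a$, $a\le k\le b$ with $\freq{i,a-1}\ge\freq{a,k}$ (for $a=1$ this is vacuous). Let $j$ be the smallest index with $\freq{j,b}=\hfreq{b}$. I first check two inequalities. (i) For every $k$ with $j\le k\le b$, $\freq{j,k}\ge\freq{j,b}$: if instead $\freq{j,k}<\freq{j,b}$ for some $k<b$, the averaging property applied to the split $[j,k]\cup[k+1,b]$ puts $\freq{j,b}$ strictly between $\freq{j,k}$ and $\freq{k+1,b}$, forcing $\freq{k+1,b}>\freq{j,b}=\hfreq{b}$ --- impossible since $k+1\le b$. (ii) $\hfreq{j-1}<\freq{j,b}$: otherwise some $i<j$ has $\freq{i,j-1}\ge\freq{j,b}$, so by averaging on $[i,j-1]\cup[j,b]$ we get $\freq{i,b}\ge\freq{j,b}=\hfreq{b}$, i.e.\ $\freq{i,b}=\hfreq{b}$ with $i<j$, contradicting the choice of $j$. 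Now for any $i<j$ and any $k\in[j,b]$, combining (ii) and (i) gives $\freq{i,j-1}\le\hfreq{j-1}<\freq{j,b}\le\freq{j,k}$, so the border condition holds at $j$; hence $j\in\hborders{b}$.

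It remains to show $j=\max\hborders{b}$, i.e.\ that no $a$ with $j<a\le b$ is a head border. Fix such an $a$ and take the witnesses $i=j$ and $k=b$ (legitimate since $j<a$ and $a\le b$); it suffices to verify $\freq{j,a-1}\ge\freq{a,b}$. If this failed, i.e.\ $\freq{j,a-1}<\freq{a,b}$, then averaging on $[j,a-1]\cup[a,b]$ would place $\freq{j,b}$ strictly between them, so $\freq{a,b}>\freq{j,b}=\hfreq{b}$, contradicting $a\le b$. Thus $\freq{j,a-1}\ge\freq{a,b}$, so $a\notin\hborders{b}$. Since $1\in\hborders{b}$ always, $\hborders{b}$ is nonempty and its maximum is $j$, proving the head-border claim; the tail-border claim follows by symmetry.

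The main thing to be careful about is the averaging property and, in particular, its strict form: it is the ``strictly between'' part that converts the weak bounds $\freq{k+1,b}\ge\hfreq{b}$ and $\freq{a,b}\ge\hfreq{b}$ into genuine contradictions. This in turn hinges on every interval occurring in the argument having positive total count $\cnt{\cdot\,;p}+\cnt{\cdot\,;n}$; if some such interval is empty the corresponding frequency collapses to that of its complement and the step in question becomes trivial, so this is only a bookkeeping nuisance rather than a real difficulty. Everything else is a short chain of these inequalities, and the two halves --- ``$j$ is a border'' and ``nothing above $j$ is a border'' --- are essentially independent.
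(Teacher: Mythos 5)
Your proof is correct and follows essentially the same route as the paper's: the key step in both is that for any $a$ with $j<a\le b$, the mediant/averaging property turns $\freq{a,b}\le\freq{j,b}=\hfreq{b}$ into $\freq{j,a-1}\ge\freq{a,b}$, disqualifying $a$ as a head border. The only difference is that you spell out why $j\in\hborders{b}$ (via your steps (i) and (ii)), which the paper merely asserts with ``first note that''.
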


\begin{proof}
First note
that $j \in \hborders{b}$. Let $i$ be an index $j <  i \leq b$.
We have $\freq{i, b} \leq \freq{j, b}$ which implies that $\freq{j, i - 1} \geq \freq{i, b}$.
This implies that $i \notin \hborders{b}$. The case for $\tborders{a}$ is similar.
\qed\end{proof}

\begin{lemma}
\label{lem:stop}
Let $a \in \hborders{b}$. 
Let $k$ be the smallest index such that $\freq{k, b + 1} = \hfreq{b + 1}$.
If $a < k$, then $a \in \hborders{b + 1}$.
\end{lemma}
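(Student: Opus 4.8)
The plan is to argue by contradiction, exploiting the characterisation of head borders together with Lemma~\ref{lem:maximal}. Suppose $a \in \hborders{b}$ with $a < k$, but $a \notin \hborders{b+1}$. By definition of head border (applied at $b+1$), there exist indices $i$ and $j$ with $1 \leq i < a \leq j \leq b+1$ such that $\freq{i, a-1} \geq \freq{a, j}$. I would first dispose of the case $j \leq b$: then exactly these same indices witness $a \notin \hborders{b}$, contradicting the hypothesis. So we may assume $j = b+1$, i.e. $\freq{i, a-1} \geq \freq{a, b+1}$ for some $i < a$.

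The key step is then to compare $\freq{a, b+1}$ with $\hfreq{b+1}$ and use $a < k$. Since $k$ is the smallest index achieving $\freq{k, b+1} = \hfreq{b+1}$ and $a < k$, we have $\freq{a, b+1} < \hfreq{b+1} = \freq{k, b+1}$ (strict, because $a$ is strictly below the smallest maximiser). A short averaging/mediant argument — the same one used repeatedly in the preceding lemmata, namely that if $\freq{a, b+1} \leq \freq{k, b+1}$ then $\freq{a, k-1} \geq \freq{k, b+1}$, and here with strictness — shows that $a$ cannot even be a head border "pointing past $k-1$"; more precisely I want to produce indices that witness $a \notin \hborders{b}$ using the block $\freq{a, k-1}$, since $k-1 \leq b$. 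Combining $\freq{i, a-1} \geq \freq{a, b+1}$ with $\freq{a, b+1} \geq \freq{a, k-1}$ being \emph{false} in general means I should instead route the contradiction directly: from $\freq{i,a-1} \geq \freq{a,b+1}$ and the fact that $\freq{a, b+1}$ is a weighted average of $\freq{a, b}$ (or of $\freq{a,k-1}$ and $\freq{k, b+1}$), I obtain $\freq{i, a-1} \geq \min(\freq{a,k-1}, \freq{k,b+1})$; if the minimum is $\freq{a, k-1}$ then $i, k-1$ witness $a \notin \hborders{b}$ (valid since $i < a \leq k-1 \leq b$), and if the minimum is $\freq{k, b+1}$ we get $\freq{i, a-1} \geq \freq{k, b+1} = \hfreq{b+1}$, which I will show is impossible because it would force $\freq{i, b+1} \geq \hfreq{b+1}$ with $i < a < k$, contradicting minimality of $k$.

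The main obstacle I anticipate is bookkeeping the strict-versus-weak inequalities correctly: the definition of $\hborders{}$ uses $\geq$, so producing a genuine contradiction requires being careful about where strictness at $k$ (from "$k$ smallest") actually buys something, and handling the degenerate sub-case where $\cnt{a, k-1; p} + \cnt{a, k-1; n} = 0$ (an empty or all-zero block, so $\freq{a,k-1}$ is undefined) — there I would fall back to $\freq{a, b+1} = \freq{k, b+1}$ directly, which already contradicts $a < k$ being below the smallest maximiser. Once the case analysis is set up, each branch closes by the standard mediant inequality already invoked in Lemmas~\ref{lem:bordprevious}--\ref{lem:maximal}, so no new machinery is needed.
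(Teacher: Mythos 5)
Your overall strategy is the paper's: assume $a \notin \hborders{b+1}$, force the witnessing pair to have $j = b+1$, and then use the minimality of $k$ together with the fact that $\freq{a,b+1}$ is a weighted average of $\freq{a,k-1}$ and $\freq{k,b+1}$ to manufacture a witness $(i,k-1)$ for $a \notin \hborders{b}$. Your first case (the minimum is $\freq{a,k-1}$) is exactly the paper's argument and is correct. The one real problem is your second case. The step you propose to close it with is invalid: $\freq{i,a-1} \geq \hfreq{b+1}$ does \emph{not} force $\freq{i,b+1} \geq \hfreq{b+1}$, because $\freq{i,b+1}$ is itself a weighted average of $\freq{i,a-1}$ and $\freq{a,b+1}$, and the latter sits strictly below $\hfreq{b+1}$, so the average can drop below it as well. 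As written, that branch of your case analysis does not close.

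The repair is immediate, however, because that branch is vacuous. You have already established the \emph{strict} inequality $\freq{a,b+1} < \freq{k,b+1}$ from the minimality of $k$; if $\freq{k,b+1}$ were the smaller of the two components $\freq{a,k-1}$ and $\freq{k,b+1}$, their weighted average $\freq{a,b+1}$ would satisfy $\freq{a,b+1} \geq \freq{k,b+1}$, a contradiction (and your degenerate sub-case, zero total count on $(a,k-1)$, gives $\freq{a,b+1} = \freq{k,b+1}$ and is excluded by the same strict inequality, as you note). Hence $\freq{a,k-1} < \freq{a,b+1} \leq \freq{i,a-1}$ always holds, only your first case occurs, and the pair $(i,k-1)$ with $i < a \leq k-1 \leq b$ witnesses $a \notin \hborders{b}$. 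This is precisely the chain of inequalities in the paper's proof; once you delete the second case or replace its justification by the observation that it cannot arise, your argument is complete.
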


\begin{proof}
Assume that $a \notin \hborders{b + 1}$, that is, there are $i$ and $j$
such that $1 \leq i < a \leq j \leq b + 1$ such that $\freq{i, a - 1} \geq
\freq{a, j}$. We must have $j = b + 1$.  Otherwise $a \notin \hborders{b}$.
Note that $\freq{a, b + 1} < \freq{k, b + 1}$, which implies $\freq{a, k - 1} < \freq{a, b + 1}$.
Since $k - 1 \leq b$, we have $a \notin \hborders{b}$, which is a contradiction.
\qed\end{proof}

These lemmata give us a simple approach. Start from $\enpr{a_1}{a_K, b + 1}$
and delete entries until you find index $k$ such that $\freq{k, b + 1}$ is maximal.
We will see later in Proposition~\ref{prop:correct} that we can easily check the maximality.

Unfortunately, as demonstrated in~\cite{calders:07:mining} there can be $\Theta(b^{2/3})$
entries in $\hborders{b}$. Hence, checking every pair will not quite yield a linear algorithm. In order to achieve linearity, we use two additional
bounds. Consider Figure~\ref{fig:borders}. We have $\hborders{a_5 - 1} =
\pr{a_1, a_2, a_4}$.  First, since $\tfreq{a_5} = \freq{a_5, a_6 - 1} >
\freq{a_1, a_2 - 1}$, we have $a_5 - 1 \notin \tcands{a_1}$.  Consequently, we
do not need to check the pair $(a_1, a_5 - 1)$. Secondly, we know that for any
$k \geq a_5$ we have $\freq{a_5, k} \leq \tfreq{a_5} \leq \freq{a_3, a_4 - 1}$.
Hence, $a_4 \notin \hcands{k}$ and we can ignore $a_4$ after we have checked $(a_4, a_5 - 1)$.
We can now put these ideas together in a
single algorithm, given as Algorithm~\ref{alg:scan}, and which we will refer to as the \textsc{Scan} algorithm.

\begin{algorithm}[tb!]
\caption{\textsc{Scan}$(p, n, o, z)$}
\label{alg:scan}
\Input{integer vectors $p$ and $n$, number of 1s (0s) in the parent tile, $o$ ($z$) }
\Output{an interval $t$ solving Problem~\ref{prob:border}}
$\mathit{best} \define \infty$; $t \define (0, 0)$;
$B \define C \define \emptyset$\;

\ForEach{$b = 1, \ldots, \abs{p}$} {
	push $b$ to the front of $B$\;
	push $b$ to the front of $C$\;
	\While {$\abs{B} > 1$ \AND~$\freq{B_1, b; p, n} \leq \freq{B_2, B_1 - 1; p, n}$} {
		\nllabel{alg:find:clean}
		\lIf {$B_1 = C_1$} {remove $C_1$}
		remove $B_1$\;
	}
	\While {$\abs{C} > 1$ \AND~$\freq{C_2, C_1 - 1; p, n} \geq \tfreq{b + 1}$} {
		$c \define \cost{C_1, b}$\;
		\nllabel{alg:find:test1}
		\lIf {$c < \mathit{best}$} { 
			$t \define (C_1, b)$; $\mathit{best} \define c$
		}

		remove $C_1$\;
	}
	$c \define \cost{C_1, b}$\;
	\nllabel{alg:find:test2}
	\lIf {$c < \mathit{best}$} 
		{$t \define (C_1, b)$; $\mathit{best} \define c$}
	
}
\Return{$t$}\;

\end{algorithm}

Proposition~\ref{prop:inequality} stated that it is enough to consider intervals where then end points are each other candidates. The next proposition shows that \textsc{Scan} actually tests all such pairs. Consequently, we are guaranteed to find the optimal solution.

\begin{proposition}
\label{prop:correct}
Let $p$ and $n$ be count vectors and let $o$ and $z$ be two integers.
$\textsc{Scan}(p, n, o, z)$ tests every pair $(a, b)$ where $a \in \hcands{b}$ and $b \in  \tcands{a}$.
\end{proposition}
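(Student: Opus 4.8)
The plan is to show that the two inner \textbf{while}-loops of \textsc{Scan}, together with the closing tests on lines \ref{alg:find:test1} and \ref{alg:find:test2}, maintain the invariant that after processing index $b$, the list $C$ equals $\hcands{b}$ (ordered with the largest index in front), and that every pair $(a,b)$ with $a \in \hcands{b}$ and $b \in \tcands{a}$ has been fed to $\cost{\cdot,\cdot}$. I would carry this out by induction on $b$.

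First I would establish the structural invariant on $B$ and $C$. Using Lemma~\ref{lem:bordprevious}, only elements of $\enpr{a_1}{a_K,b+1}$ can survive into $\hborders{b+1}$, which justifies the "push $b$, then only delete" strategy for $B$. Then, using Lemma~\ref{lem:maximal} and Lemma~\ref{lem:stop}, I would argue that the first \textbf{while}-loop (the condition $\freq{B_1,b}\leq\freq{B_2,B_1-1}$) deletes exactly the elements that are $\geq k$, where $k$ is the smallest index with $\freq{k,b}=\hfreq{b}$; what remains is precisely $\hborders{b}$. Here I need the remark promised after Lemma~\ref{lem:stop} that maximality of $\freq{k,b}$ can be detected locally by the stated frequency comparison — I would spell out why the loop condition failing is equivalent to $B_1$ being that smallest maximiser. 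Next, for $C$: I would show that the head-candidate condition (no $i<a\le j\le b$ with $\freq{i,a-1}\ge\freq{j,b}$) is equivalent, given that $a$ is already a head border, to a comparison against $\hfreq{b}$ restricted appropriately, so that deleting from $C$ exactly the non-candidates during the second \textbf{while}-loop yields $C=\hcands{b}$. The two extra bounds described just before the algorithm — $\tfreq{a_5}>\freq{a_1,a_2-1}$ forcing $a_5-1\notin\tcands{a_1}$, and $\freq{a_5,k}\le\tfreq{a_5}\le\freq{a_3,a_4-1}$ forcing $a_4\notin\hcands{k}$ for all later $k$ — are the justifications that the elements popped in the second loop never again become useful, so nothing is discarded prematurely.

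Then I would verify the "tests every pair" claim. Fix $a\in\hcands{b}$ with $b\in\tcands{a}$. By the invariant, $a$ is in $C$ when the outer loop reaches index $b$. I would argue that, because $b\in\tcands{a}$, the condition $\freq{C_2,C_1-1}\ge\tfreq{b+1}$ either fires for $C_1=a$ inside the second \textbf{while}-loop — in which case $\cost{a,b}$ is tested on line \ref{alg:find:test1} — or $a$ is still $C_1$ when the loop exits, in which case $\cost{a,b}$ is tested on line \ref{alg:find:test2}. The point to nail down is that $a$ cannot be popped from $C$ at an earlier iteration $b'<b$ without $b\notin\tcands{a}$: if $a$ were removed at step $b'$ it was because $\freq{C_2,a-1}\ge\tfreq{b'+1}\ge\freq{a,b}$ (using $b\ge b'+1$), and $C_2<a$, which contradicts $a\in\hcands{b}$ — or, the other case, $a$ was removed inside the first loop (when $a=C_1=B_1$), which by the head-border argument means $a\notin\hborders{b}\supseteq$ nothing useful, again contradicting $a\in\hcands{b}$. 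Combining with Proposition~\ref{prop:inequality}, which guarantees an optimum is attained at such a candidate pair, \textsc{Scan} returns the optimal interval.

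The main obstacle I expect is the bookkeeping around the synchronised deletions from $B$ and $C$: the algorithm keeps $B$ as (a superset converging to) the border list and $C$ as the candidate list, and they are pruned by different conditions at different rates, with the line \texttt{\textbf{if} $B_1=C_1$ \textbf{then} remove $C_1$} coupling them. Making the induction hypothesis strong enough to cover the joint state of $(B,C)$ across iterations — and checking that an element removed from $C$ but retained in $B$ (or vice versa) is genuinely never needed again — is the delicate part; everything else is a direct unwinding of Lemmata~\ref{lem:bordprevious}--\ref{lem:stop} and the two frequency bounds.
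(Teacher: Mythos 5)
Your strategy is the same as the paper's: show by induction that $B$ tracks $\hborders{b}$ and $C$ tracks the head candidates, then argue that the prefix of $C$ processed at step $b$ covers every $a$ with $b \in \tcands{a}$. Two of your intermediate claims, however, would fail as written. First, the invariant $C = \hcands{b}$ is false in general: the algorithm only ever removes the \emph{front} element of $C$ (either in sync with a deletion from $B$, or after a test in the second loop), so an element deeper in $C$ that ceases to be a head candidate is never purged, and $C$ is in general a strict superset of $\hcands{b}$. What you need, and all you can prove, is the containment $\hcands{b} \subseteq C$ --- which suffices, since surplus elements only cause harmless extra tests. Relatedly, your certificate that a removal of $a$ from $C$ at round $b'<b$ implies $a \notin \hcands{b}$ has the wrong middle term: the deletion condition yields $\freq{C_2, a-1} \geq \tfreq{b'+1} \geq \freq{b'+1, b}$, not $\geq \freq{a,b}$ (the tail frequency at $b'+1$ says nothing about intervals starting at $a \leq b'$); the witnessing pair for non-candidacy is then $(C_2,\, b'+1)$ with $C_2 < a \leq b'+1 \leq b$.

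Second, your claim that if $b \in \tcands{a}$ the second while-loop necessarily reaches $a$ is exactly the step that needs the monotonicity of inter-border frequencies (Lemma~\ref{lem:bordmonotone}), which you never invoke. If the loop halts at some front $C_m$ with $a = C_k$, $k > m$, you must conclude $b \notin \tcands{a}$; this follows because $\freq{C_k, C_{k-1}-1} \leq \freq{C_{m+1}, C_m - 1} < \tfreq{b+1} = \freq{b+1,j}$ for some $j$, using that the frequencies $\freq{C_{i+1}, C_i - 1}$ increase towards the front of the border list. The same lemma is what makes the purely local test $\freq{B_1,b} \leq \freq{B_2, B_1 - 1}$ in the first loop detect the smallest maximiser of $\freq{\cdot, b}$ --- the point you flagged as needing to be spelled out. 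With these repairs your argument coincides with the paper's proof.
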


To show this, we first need the following lemma.
\begin{lemma}
\label{lem:bordmonotone}
Let $\enpr{a_1}{a_L} = \hborders{b}$. Then $\freq{a_{k - 1}, a_k - 1} < \freq{a_k, a_{k + 1} - 1}$.
\end{lemma}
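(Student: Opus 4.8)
\textbf{Proof plan for Lemma~\ref{lem:bordmonotone}.}

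The statement is that consecutive head borders in $\hborders{b}$ have strictly increasing ``block frequencies'': if $\enpr{a_1}{a_L} = \hborders{b}$, then $\freq{a_{k-1}, a_k - 1} < \freq{a_k, a_{k+1} - 1}$ for every relevant $k$. I would prove this by contradiction directly from the definition of a head border. Suppose for some $k$ we have $\freq{a_{k-1}, a_k - 1} \geq \freq{a_k, a_{k+1} - 1}$. Then taking $i = a_{k-1}$, $a = a_k$, and $j = a_{k+1} - 1$ gives indices with $1 \leq i < a \leq j \leq b$ (the inequality $a_{k+1} - 1 \leq b$ holds since $a_{k+1}$ is a head border of $b$, so $a_{k+1} \leq b$, unless $a_k$ is the last border, but then there is no ``$k+1$'' term and nothing to prove) such that $\freq{i, a-1} \geq \freq{a, j}$. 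By definition this witnesses $a_k \notin \hborders{b}$, contradicting $a_k \in \hborders{b}$.

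The only subtlety is handling the boundary at the right end of the list: when $a_k = a_L$ is the last head border, the quantity $\freq{a_k, a_{k+1}-1}$ is not defined, so the claim is vacuous there, and I would either restrict the index range to $k \in \{2, \ldots, L-1\}$ or interpret $a_{L+1} - 1$ as $b$ — in the latter reading one would additionally need $a_{L-1} \leq b$ and $a_L \leq b$, both automatic since all listed borders are $\leq b$. I expect the paper intends the former (pairs of genuinely consecutive entries), in which case the argument above is complete with no edge cases beyond noting $a_{k+1} \leq b$.

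The main (and really the only) obstacle is purely notational: being careful that the triple $(i, a, j) = (a_{k-1}, a_k, a_{k+1}-1)$ satisfies the strict/non-strict inequalities demanded by the definition of head border ($1 \leq i < a \leq j \leq b$), in particular that $a_{k-1} < a_k$ (true since the $a_\ell$ are listed in increasing order and distinct) and that $a_k \leq a_{k+1} - 1$, i.e.\ $a_k < a_{k+1}$ (again from strict increasingness of the list). Once these are checked, the contradiction with $a_k \in \hborders{b}$ is immediate, so the proof is short.
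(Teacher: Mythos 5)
Your proof is correct and is essentially identical to the paper's one-line argument: assume $\freq{a_{k-1}, a_k - 1} \geq \freq{a_k, a_{k+1} - 1}$ and observe that the triple $(a_{k-1}, a_k, a_{k+1}-1)$ witnesses $a_k \notin \hborders{b}$, a contradiction. Your extra care about the index range and the strict increase of the border list is fine but not needed beyond what the paper tacitly assumes.
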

\begin{proof}
Assume that $\freq{a_{k - 1}, a_k - 1} \geq \freq{a_k, a_{k + 1} - 1}$. Then $a_k \notin \hborders{b}$.
\qed\end{proof}

By which we can proceed with the proof for Proposition~\ref{prop:correct}.

\begin{proof}
Let us first prove that $B$ at $b$th step is equal to $\hborders{b}$. We prove
this using induction. The case $b = 1$ is trivial and assume that the result
hold for $b - 1$. Let $\enpr{a_1}{a_L} = \hborders{b - 1}$.
Lemma~\ref{lem:bordprevious} implies that $\hborders{b} \subseteq \enpr{a_1}{a_L, b}$.

Assume that $\freq{b, b} > \freq{a_L, i - 1} = \hfreq{b - 1}$.  By definition,
$b \in \hborders{b}$. Lemma~\ref{lem:maximal} implies that $b$ is the smallest
index $k$ for which $\freq{k, b} = \hfreq{b}$. Lemma~\ref{lem:stop} now states
that $\hborders{b} = \enpr{a_1}{a_L, b}$ which is exactly what we get since the
while loop on Line~\ref{alg:find:clean} is not executed.

Assume that $\freq{b, b} \leq \freq{a_L, i - 1}$. Then $b \notin \hborders{b}$
and indeed it is deleted in the first run of the while loop
(Line~\ref{alg:find:clean}).  Let $a_k \in \hborders{b}$ be the first entry in
$B$ after the while loop has finished. Let $a_l \in \hborders{b}$ be the smallest index
for which $\freq{a_l, b} = \hfreq{b}$.  We claim that $k = l$. If $l > k$,
then $\freq{a_l, b} \leq \freq{a_{l - 1}, a_{l} - 1}$ which implies that $\freq{a_l, b} \leq \freq{a_{l - 1}, b}$
which is a contradiction. Assume that $l < k$. By definition of $k$,
we have $\freq{a_{k - 1}, a_k - 1} < \freq{a_k, b}$.
Lemma~\ref{lem:bordmonotone} implies that $\freq{a_l, a_{k} - 1} \leq \freq{a_{k - 1}, a_k - 1}$.
Hence, $\freq{a_l, b} < \freq{a_k, b}$, which is a contradiction. 
Consequently, $k = l$. Lemma~\ref{lem:stop} now states
that $\hborders{b} = \enpr{a_1}{a_k}$ which is exactly what we have.

Now that we have proved that $B$ at $b$th step is equal to $\hborders{b}$.  Let
us consider the list $C$. Let $a \in B \setminus C$. This means that $a$ was
deleted during some previous round, say $k < i$, and that there is $j$ such
that $\freq{j, a - 1} \geq \tfreq{k + 1} \geq \freq{k + 1, b}$. Hence $a$ is
not a head candidate of $b$. Consequently, all head candidates of $b$ are
included in $C$ at $b$th step.

Not all entries of $C$ are tested during the $b$th step.  Assume that we
have completed $b$th step and $C_k$ is not tested ($k > 1$).  Since $C$ is a
subset of the border list, Lemma~\ref{lem:bordmonotone} implies that
$\freq{C_k, C_{k - 1} - 1} \leq \freq{C_2, C_1 - 1} < \tfreq{b + 1}$. 
There is $j$ such that $\tfreq{b + 1} = \freq{b + 1, j}$. This implies that $b$
is not a tail candidate for $C_k$, which completes the proof.
\qed\end{proof}

Let us finish this section by demonstrating the linear execution time
$\Theta(\abs{p})$ of \textsc{Scan}.  To this end, note that we have three while-loops in the algorithm: two inner and one outer. During each iteration of the
first inner loop we delete an entry from $B$, a unique number between $1$ and
$\abs{p}$.  Consequently, the \emph{total} number of times we execute the first inner
loop is $\abs{p}$, at maximum.  Similarly, for the second inner loop. The outer loop is executed $\abs{p}$ times. Next, note that there are two non-trivial subroutines in the algorithm.
First, on Lines 6 and 9, we need to compute frequencies. This can be done in constant time by, e.g. precomputing $c_j = \sum_{i = 1}^j p_i$, and then using the identity $\cnt{i, j; p} = c_j - c_{i - 1}$.  
Secondly, on Line 9, we need to compute $\tfreq{b}$. We can precompute this in linear time by using the algorithm given in~\cite{calders:07:mining}, which involves computing tail borders (equivalent to computing $B$ in \textsc{Scan}) and applying Lemma~\ref{lem:maximal}. This shows that the total execution time for the algorithm is $\Theta(\abs{p})$.

Now that we have a linear algorithm for discovering an optimal tile given a  fixed set of columns, we need an algorithm for discovering the columns themselves.
We employ a simple quadratic enumeration given in Algorithm~\ref{alg:findtile}.
Note that \textsc{Scan} assumes that the optimal tile is more dense than the background tile, we have to call \textsc{Scan} twice, once normally to find the optimal dense subtile, and once with ones and zeroes reversed to find the optimal sparse subtile.

\begin{algorithm}[tb!]
\caption{\FindTile$(X, \tree{T}, D)$}
\label{alg:findtile}
\Input{parent tile $X = (s, e) \times (x, y)$, current tile tree $\tree{T}$, dataset $D$}
\Output{$B$, a tile optimizing $\tree{T} + X \to B$, see Problem~\ref{prob:subtile}  }

$o \define \ones{X; \tree{T}}$;
$z \define \zeroes{X; \tree{T}}$;
$B \define X$\;

\ForEach {$c$ and $d$ such that $x \leq c \leq d \leq y$} {
	update $p$ and $n$\;

	$(a, b) \define \textsc{Scan}(p, n, o, z)$\; 
	$Y \define (a + s - 1, b + s - 1) \times (c, d)$\;
	\lIf {$\lc{\tree{T} + X \to Y, D} < \lc{\tree{T} + X \to B, D}$} {$B \define Y$}
	$(a, b) \define \textsc{Scan}(n, p, z, o)$\;
	$Y \define (a + s - 1, b + s - 1) \times (c, d)$\;
	\lIf {$\lc{\tree{T} + X \to Y, D} < \lc{\tree{T} + X \to B, D}$} {$B \define Y$}
}
\Return{$B$}\;
\end{algorithm}

The computational complexity of \FindTile is $\Theta(\numrows^2\numcols)$ where $\numrows$ is the
number of columns and $\numcols$ is the number of rows in the parent tile. 
However, if $\numcols$ is smaller than $\numrows$, we can transpose the parent tile and obtain a $\Theta(\numrows\numcols\min(\numrows, \numcols))$ execution time.

\section{Related Work}\label{sec:related}

Frequent itemset mining~\cite{agrawal:94:fast} is perhaps the most well-known example of pattern mining. Here, however, we are not just interested in the itemsets, but also explicitly want to know which rows they cover. Moreover, we are not interested in finding all tiles, but aim to find tilings that describe the data well.

Mining sets of patterns that describe a dataset is an actively studied topic~\cite{bringmann:07:chosen, tatti:08:decomposable, vreeken:11:krimp}. Related in that it employs MDL, is the \textsc{Krimp} algorithm~\cite{vreeken:11:krimp}, which proposed the use of MDL to identify pattern sets. Geerts et al.\ discuss mining large tiles of only $1$s~\cite{geerts:04:tiling}. Different from these approaches, our models are hierarchical, and do allow for noise within tiles.

Kontonasios and De Bie discuss ranking a candidate collection of tiles, employing a maximum entropy model of the data to measure the interestingness of a tile~\cite{konto:10:sdm,debie:11:dami}. 
Boolean matrix factorisation~\cite{miettinen:08:discrete} can be regarded as a tile mining. The goal is to find a set of Boolean factors such that the Boolean product thereof (essentially tiles of only $1$s) approximates the dataset with little error. Similarly, bi-clustering can be regarded as a form of tiling, as it partitions the rows and columns of a dataset into rectangles~\cite{pensa:05:bicluster}.
Compared to these approaches, a major difference is that we focus on easily inspectable hierarchical models, allowing nested refinements within tiles.

Most closely related to \Mondrian is the approach by Gionis et
al.~\cite{gionis:04:geometric}, who proposed mining hierarchies of tiles, and
gave a randomised heuristic for finding good subtiles. We improve over this
approach by formally defining the problem in terms of MDL, employing a richer
modelling language in the sense that it allows tiles with the same parent to overlap,
introducing a  deterministic iterative any-time algorithm, that given a tile
tree efficiently finds the optimal subtile. 
Since the approach by Gionis et al.~\cite{gionis:04:geometric} does not use
MDL as a stopping criterion, it is not possible to compare both methods directly. In principle, it is possible to adopt their search strategy to our score. A fair comparison between the two search strategies, however, is not trivial since the randomised search depends on a parameter, namely the number of restarts. This parameter acts as a trade-off between the expected performance and execution time. Choosing
this parameter is difficult since there are no known bounds for the expected
performance.

Our approach for discovering optimal subtiles is greatly inspired by the work
of Calders et al.~\cite{calders:07:mining} in which the goal was to compute the head
frequency, $\hfreq{i}$, given a stream of binary vectors.

As \Mondrian is an iterative any-time algorithm, and hence iterative data mining approaches are related. The key idea of these approaches is to iteratively find the result providing the most novel information about the data with respect to what we already know~\cite{hanhijarvi:09:tell, debie:11:dami, mampaey:11:tell}. Here, we focus on hierarchical tiles, and efficiently find the locally optimal addition.

\section{Experiments}\label{sec:exps}

In this section we empirically evaluate our approach. 
We implemented our algorithms in C++, and provide the source code, along with the synthetic data generator.\!\footnote{\url{http://adrem.ua.ac.be/stijl/}}
All experiments were executed single-threaded on Linux machines with Intel Xeon X5650 processors (2.66GHz) and 12 GB of memory.

We use the shorthand notation $L\%$ to denote the compressed size of $\DB$ with the tile tree $\tree{T}$ as discovered by \Mondrian relative to the most simple tree $\tree{T}_0$, 
$\frac{L(\DB,\tree{T})}{L(\DB,\tree{T}_0)}\%$, 
wherever $\DB$ and $\tree{T}$ are clear from context.

We do not compare to the naive strategy of finding optimal subtiles as $\Theta(\numrows^2\numcols^2)$ execution time is impractical even for very small datasets.

\begin{table}[t!]
\caption{Results of \Mondrian on five datasets. Shown are, per dataset, number of rows and columns, overall density, and for resp. without and with overlap, the relative compression $L\%$ (lower is better), number of discovered tiles, and wall-clock runtime.}\label{tbl:data}
\begin{tabular*}{\textwidth}{@{\extracolsep{\fill}}l c@{ } rr r c@{ } rrr c@{\hspace{0.3em}} rrr}
\toprule
 && & & && \multicolumn{3}{l}{\textbf{Disjoint}} && \multicolumn{3}{l}{\textbf{Overlap}}\\
\cmidrule{7-9} \cmidrule{11-13}
\emph{Dataset} && $\numrows$ & $\numcols$ & $\%1$s &&
 $L\%$ & $|\tree{T}|$ & \emph{time} &&
 $L\%$ & $|\tree{T}|$ & \emph{time} \\
\midrule
Composition && 240 & 240 & $23.2$ &&
 $81.72$ & $8$ & $57$s && 
 $81.58$ & $7$ & $1$m$23$s\\[0.3em]
Abstracts && $859$ & 
$541$ & $6.6$ && 
 $89.59$ & $14$ & $16$m$03$s && 
 $89.54$ & $14$ & $27$m$54$s \\

DNA Amp. && $4\,590$ & $391$ & $1.5$ &&
 $61.91$ & $466$ & $334$m && 
 $61.61$ & $446$ & $625$m \\
Mammals && $2\,183$ & $121$ & $20.5$ &&
 $54.69$ & $55$ & $1$m$37$s && 
 $54.62$ & $50$ & $3$m$06$s \\
Paleo && $501$ & $139$ & $5.1$ &&
 $80.23$ & $14$ & $39$s && 
 $79.07$ & $13$ & $1$m$22$s\\
\bottomrule
\end{tabular*}
\end{table}

\vspace{0.2em}\textbf{Datasets}
We evaluate our measure on one synthetic, and four publicly available real world datasets. The \by{240}{240} synthetic dataset {\em Composition} was generated to the likeness of the famous Mondrian painting `Composition II in Red, Blue, and Yellow', where we use different frequencies of $1$s for each of the colours.
{\em Abstracts} contains the abstracts of papers accepted at ICDM up to 2007, for which we take the words with a frequency of at least $0.02$ after stemming and removing stop words~\cite{debie:11:dami}.
The {\em DNA} amplification data contains DNA copy number amplifications. Such copies are known to activate oncogenes and are the hallmarks of nearly all advanced tumours~\cite{myllykangas:06:dna}. 
The {\em Mammals} presence data consists of presence records of European mammals\footnote{Available for research purposes: \url{http://www.european-mammals.org}} within geographical areas of $50 \times 50$ kilometers~\cite{mitchell-jones:99:atlas}.
Finally, {\em Paleo} contains information on fossil records\footnote{NOW public release 030717 available
from~\cite{fortelius:06:spectral}.}
found at specific palaeontological sites in Europe~\cite{fortelius:06:spectral}.

We give the basic properties of these datasets in Table~\ref{tbl:data}.
To obtain good orders for the real world datasets, we applied SVD, that is, we ordered items and transactions based on first left and right eigenvectors.

\vspace{0.2em}\textbf{Synthetic Data}
As a sanity check, we first investigate whether \Mondrian can reconstruct the model for the {\em Composition} data. We ran experiments for both the disjoint and the overlapping tile settings. 
With the latter setup we perfectly capture the underlying model in only $7$ tiles. We show the data and discovered model as Figure~\ref{fig:composition}; each of the rectangles in the painting are represented correctly by a tile, including the crossing vertical and horizontal bars. When we require disjoint tiles, the fit of the model is equally good, however the model requires one additional tile to model the crossing bars.

\begin{figure}[t]
\begin{center}
\subfigure[Composition\label{fig:composition}]{%
\includegraphics[width=0.25\textwidth]{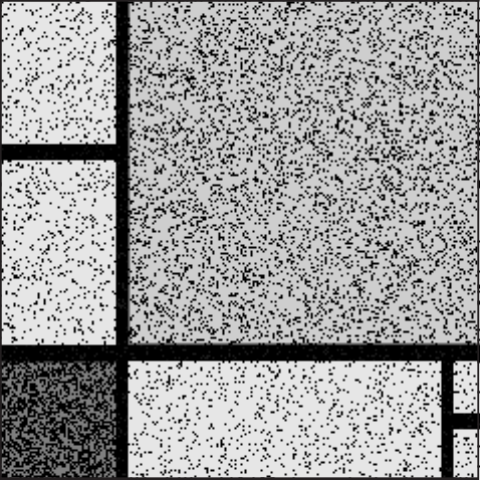}
}
\hspace{1em}
\subfigure[Paleo (transposed)\label{fig:paleo-clean}]{%
\parbox[r]{0.6\textwidth}{%
\vspace{-4.5em}%
\includegraphics[width=0.6\textwidth]{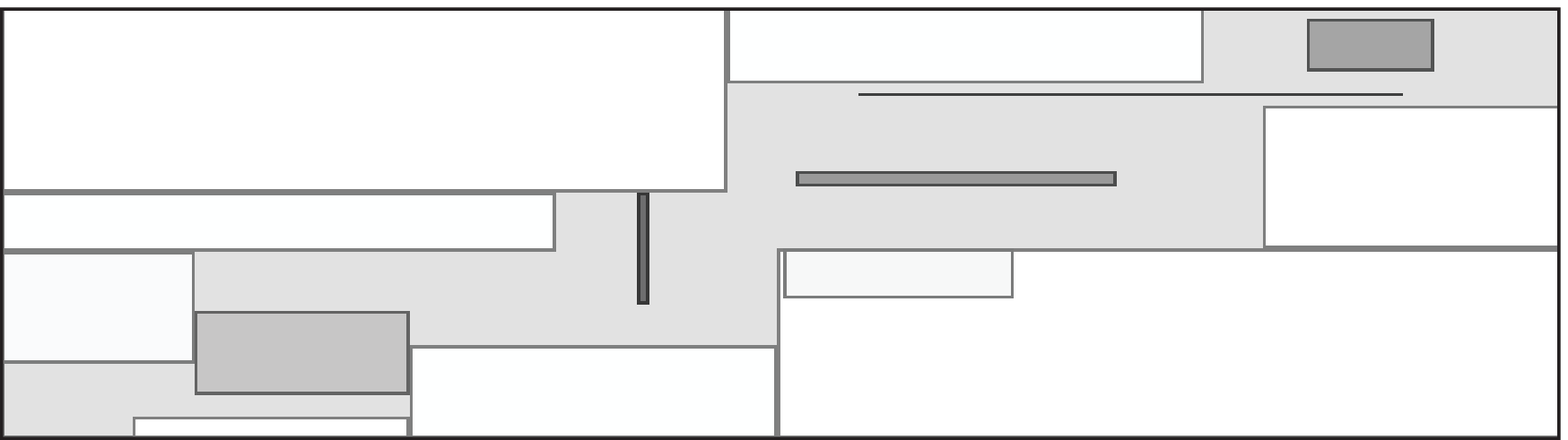}\\[1em]%
\vspace{1em}%
\includegraphics[width=0.6\textwidth]{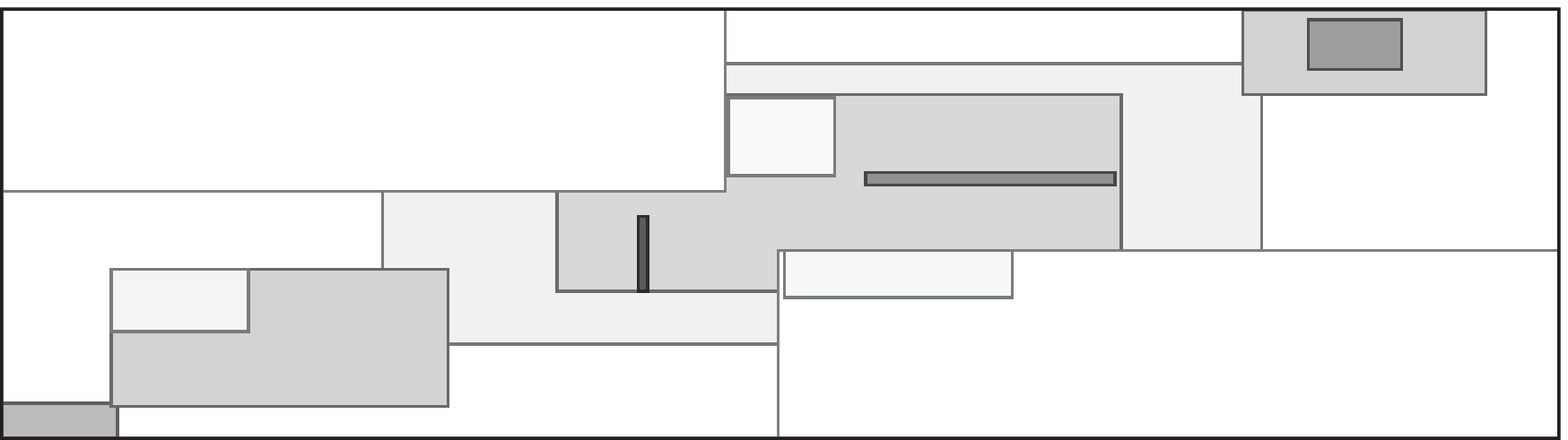}%
}}
\end{center}
\vspace{-1.5em}
\caption{Results of \Mondrian on (a) {\em Composition} and (b) {\em Paleo}, with (top) the disjoint hierarchical tiling, and (bottom) the tiling allowing overlap within the same parent tile. For {\em Paleo} we do not show individual $1$s. Darker tiles correspond to higher frequency}
\end{figure}

\vspace{0.2em}\textbf{Quantitative Analysis}
Next, we consider quantitative results on the real datasets. We run \Mondrian both for disjoint and overlapping tiles. Table~\ref{tbl:data} gives the relative compressed size $L\%$, the number of tiles in the returned tile trees, and the wall-clock time it took to find these models.

These results show \Mondrian finds trees that summarise the data well. The relative compressed sizes tell the data is described succinctly, while the tile trees remain small enough to be considered by hand; even for {\em DNA}, by the hierarchical nature of the model, the user can quickly read and understand the model.

For {\em Mammals} we see that whereas the baseline model $\tree{T}_0$ requires $193\,315$ bits, the \Mondrian model with overlapping tiles only requires $105\,589$ bits. Note that, as the total compressed size essentially is the log-likelihood of the model and the data, a gain of a single bit corresponds to a twice as likely model.

In all experiments, allowing overlap results in better models. Not only do they give more succinct data descriptions, the discovered tile trees are also simpler, requiring fewer tiles to do capture the structure of the data. 
By allowing overlap, the search space is expanded, and hence more computation is required: on average, in our experiments, twice as much. 

On these datasets, the current \Mondrian implementation requires from seconds up to a few hours of runtime. By its iterative any-time nature, users, however, can already start to explore models while in the background further refinements are calculated.

\vspace{0.2em}\textbf{Qualitative Analysis} 
Next, we investigate the discovered models in more detail. To this end, we first use the {\em Paleo} data as by its modest size it is easily visually representable. In Figure~\ref{fig:paleo-clean} we show the result of \Mondrian on this data, with the top figure the result of allowing only disjoint tiles, and in the bottom figure when allowing overlap. Darker toned tiles correspond to more dense areas of the data. For clarity, we here do not show the individual $1$s (as we did in Fig.~\ref{fig:paleo-example}, which corresponds to the bottom plot of Fig.~\ref{fig:paleo-clean}).

The first thing we note, is that the two results are quite alike. The model with overlap, however, is a bit simpler and `cleaner': the relatively dense areas are of the data are easier to spot for this model, than for the disjoint one. Second, it uses the hierarchical property as intended: in the top right corner, for instance, we see a dense, dark-grey tile within a lighter tinted square, within a very sparse tile. While for reasons of space we can only show these examples, these are observations that hold in general---by which it may come at no surprise that by allowing overlap we obtain better MDL scores. 

Next, we inspect the results on {\em Abstracts}. This sparse dataset has no natural order by itself, and when we apply SVD to order it, we find most of the $1$s are located in the top-left corner of the data. When we apply \Mondrian, we see it correctly reconstructs this structure. Due to lack of space, however, we do not give the visual representation. Instead, we investigate the most dense tile, which covers the top-left corner. We find that it includes frequent words that are often used in conjunction in data mining abstracts, including {\em propose}, {\em efficient}, {\em method}, {\em mine}, and {\em algorithm}. Note that, by design, \Mondrian gives a high level view of the data; that is, it tells you where the ones are, not necessarily their associations. Extending it to recognise structure within tiles is future work.

\section{Discussion}\label{sec:discuss}

The experiments show \Mondrian discovers succinct tile trees that summarise the  data well. Importantly, the discovered tile trees consist of only few tiles, and are even easier inspected by the hierarchical property of our models.

The complexity of \Mondrian is much lower than that of the naive locally optimal approach; as with $\Theta(\numrows\numcols\min(\numrows,\numcols))$ its complexity is only squared in the smallest dimension of the data. However, for datasets with both many rows and columns, runtimes may be non-trivial. \Mondrian, however, does allow ample opportunity for optimisation. \FindTile, for instance, can be trivially run in parallel per parent tile, as well as over $a$ and $b$. 

As there is no such thing as a free lunch, we have to note that MDL is no magic wand. In the end, constructing an encoding involves choices---choices one can make in a principled manner (fewer bits is better), but choices nevertheless. Here, our choices were bounded by ensuring optimality of \FindTile. As such, we currently ignore globally optimal encoding solutions, such as achievable by maximum entropy modelling~\cite{debie:11:dami}. Although we could so obtain globally optimality of the encoding, the effects of adding a tile become highly unpredictable, which would break the locally optimal search of \FindTile.

We assume the rows and columns of the data to be ordered. That is, in the terminology of~\cite{gionis:04:geometric}, we are interested in geometric tiles. Although~\cite{gionis:04:geometric,fortelius:06:spectral} showed good geometric tilings can be found on spectrally ordered data, it would make for engaging research to investigate whether we can find good orderings on the fly, that is while we are tiling, ordering the data such that we optimise our score.

\section{Conclusion}\label{sec:concl}

We discussed finding good hierarchical tile-based models for binary data. We formalised the problem in terms of MDL, and introduced the \Mondrian algorithm for greedily approximating the score on binary data with ordered rows and columns. For unordered data, spectral techniques can be used to find good orders~\cite{gionis:04:geometric}. We gave the \FindTile procedure for which we proved it finds the locally optimal tile in $\Theta(\numrows\numcols\min(\numrows,\numcols))$. 

Experiments showed \Mondrian discovers high-quality tile trees, providing succinct description of binary data. Importantly, by their hierarchical shape and small size, these models are easily interpreted and analysed by hand. 

Future work includes optimising the encoded cost by mining tiles and orders at the same time, as opposed to using ordering techniques oblivious to the target.

\section*{Acknowledgements}
Nikolaj Tatti and Jilles Vreeken are supported by Post-Doctoral Fellowships of the Research Foundation -- Flanders (\textsc{fwo}).

\bibliographystyle{plain-initials}
\bibliography{bib/abbrev,bib/bib-jilles}

\ifapx
\appendix
\section{Proofs}
\label{sec:apx}

\begin{proof}[of Proposition~\ref{prop:gain}]
Let $i = \tid{X; \tree{T}}$. It follows directly from the definition that for
any tile $Z \in \tree{T}$ we have $\tid{Z; \tree{T}'} = \tid{Z; \tree{T}}$ if $\tid{Z;
\tree{T}} < i$, while $\tid{Z; \tree{T}'} = \tid{Z; \tree{T}} + 1$ if $\tid{Z;
\tree{T}} \geq i$. We also have $i = \tid{Y; \tree{T}'}$. This implies that $\cells{Q; \tree{T}} = \cells{Q; \tree{T}'}$ for any $Q \in \tree{T}$
such that $Q \neq X$, which implies that $\lc{D \mid Q, \tree{T}} = \lc{D \mid Q, \tree{T}'}$.
Consequently, we have
\[
\begin{split}
	\lc{D, \tree{T}'} - \lc{D, \tree{T}} & = \lc{D \mid Y, \tree{T}'} + \lc{D \mid X, \tree{T}'} - \lc{D \mid X, \tree{T}} +  \lc{Y \mid \tree{T}'} \\
	& = \ent{u, v} + \lc{D \mid X, \tree{T}'} - \ent{o, z} + \lc{Y \mid \tree{T}'}\quad.
\end{split}
\]
Since $\cells{Y; \tree{T}'} \subseteq \cells{X; \tree{T}}$,
we have $\ones{X; \tree{T}'} = o - u$ and $\zeroes{X; \tree{T}'} = z - v$.
\qed\end{proof}

\begin{lemma}
\label{lem:inequality}
Define $g(x, y, o, z) = \ent{x, y} + \ent{o - x, z - y}$.
Assume $8$ non-negative numbers $r_p$, $r_n$, $s_p$, $s_n$, $t_p$, $t_n$, $o$, $z$. 
Assume that $t_p + t_n > 0$ and $s_p + s_n > 0$ and $t_p / (t_p + t_n) \geq s_p / (s_p + s_n)$.
Write $u = r_p + s_p$ and $v = r_n + s_n$.
Assume that $u + t_p \leq o$, $v + t_n \leq z$, and $u / (u + v) > o / (o + z)$.
Let $q = g(u, v, o, z)$. Then either $g(r_p, r_n, o, z) < q$ or $g(u + t_p, v + t_n, o, z) \leq q$.
\end{lemma}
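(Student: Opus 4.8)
The plan is to use that $g\pr{\cdot, \cdot, o, z}$ is a \emph{concave} function of its first two arguments on the nonnegative quadrant. Indeed $\ent{x, y} = (x + y)\, H\pr{x / (x+y)}$ is the perspective of the concave binary entropy $H$, hence concave, and $g$ is the sum of $\ent{x, y}$ and of $\ent{\cdot, \cdot}$ precomposed with the affine substitution $(x, y) \mapsto (o - x, z - y)$, so $g$ is concave as well. (Throughout I would use $0 \log 0 = 0$; coordinates that vanish are handled by one-sided derivatives, and in the generic case everything below is strictly positive.) If $g\pr{r_p, r_n, o, z} < q$ the first alternative holds and we are done, so from now on I assume $g\pr{r_p, r_n, o, z} \geq q$ and prove $g\pr{u + t_p, v + t_n, o, z} \leq q$.

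Next I would restrict $g$ to the two segments through $(u, v)$ that join it to $(r_p, r_n)$ and to $(u + t_p, v + t_n)$. Put $\phi(\lambda) = g\pr{r_p + \lambda s_p, r_n + \lambda s_n, o, z}$ and $\psi(\mu) = g\pr{u + \mu t_p, v + \mu t_n, o, z}$ for $\lambda, \mu \in [0, 1]$; both are concave, $\phi(1) = \psi(0) = g\pr{u, v, o, z} = q$, and $\phi(0) = g\pr{r_p, r_n, o, z} \geq q = \phi(1)$. Since the derivative of a concave function is non-increasing, $\phi'(1) \leq \phi(1) - \phi(0) \leq 0$, that is $s_p\, \partial_x g(u, v) + s_n\, \partial_y g(u, v) \leq 0$. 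By the same token $\psi$ is concave, so if I can show $\psi'(0) = t_p\, \partial_x g(u, v) + t_n\, \partial_y g(u, v) \leq 0$, then $\psi$ is non-increasing on $[0, 1]$ and $g\pr{u + t_p, v + t_n, o, z} = \psi(1) \leq \psi(0) = q$, as desired. Thus everything reduces to the implication: a nonpositive directional derivative of $g$ at $(u, v)$ along $s$ forces a nonpositive directional derivative along $t$.

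To close this I would compute the gradient: $\partial_x g(u, v) = \log(\beta / \alpha)$ and $\partial_y g(u, v) = \log\pr{(1 - \beta) / (1 - \alpha)}$, where $\alpha = u / (u + v)$ is the density of the tile and $\beta = (o - u) / \pr{(o - u) + (z - v)}$ the density of its complement within the parent. The hypothesis $u / (u + v) > o / (o + z)$ forces $\beta < o / (o + z) < \alpha$, hence $\partial_x g(u, v) \leq 0$ and $\partial_y g(u, v) \geq 0$. Writing $\sigma = s_p / (s_p + s_n)$ and $\tau = t_p / (t_p + t_n)$ (well defined since $s_p + s_n > 0$ and $t_p + t_n > 0$), the directional derivative along a direction of density $\theta$ is proportional to $h(\theta) = \theta \log(\beta / \alpha) + (1 - \theta) \log\pr{(1 - \beta) / (1 - \alpha)}$, an affine function whose slope $\log\pr{\beta (1 - \alpha) / (\alpha (1 - \beta))}$ is negative because $\beta < \alpha$ and $1 - \alpha < 1 - \beta$. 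So $h$ is decreasing; since $\tau \geq \sigma$ by hypothesis, $h(\tau) \leq h(\sigma)$, and as $s \cdot \nabla g(u, v) = (s_p + s_n)\, h(\sigma) \leq 0$ we conclude $t \cdot \nabla g(u, v) = (t_p + t_n)\, h(\tau) \leq 0$, i.e. $\psi'(0) \leq 0$.

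The step I expect to be the real obstacle is not this chain of inequalities but the boundary bookkeeping behind it: when one of $u$, $v$, $o - u$, $z - v$ is zero, $\ent{\cdot, \cdot}$ is not differentiable there and the formulas for $\partial_x g$, $\partial_y g$ must be read as one-sided derivatives (possibly $\pm \infty$). In every such degenerate case, though, the matching coordinate of $s$ (and usually of $t$) is pinned to $0$ by the hypotheses $u = r_p + s_p$, $v = r_n + s_n$, $u + t_p \leq o$, $v + t_n \leq z$, so the relevant dot product is trivially nonpositive; it is this case analysis, rather than the main estimate, that requires the care.
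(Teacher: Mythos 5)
Your proof is correct and is essentially the paper's own argument in different clothing: the paper's auxiliary function $h(x,y)=xA+yB+(o-x)C+(z-y)D$ built from "suboptimal codes" is exactly the tangent plane of the concave $g$ at $(u,v)$, its inequality $g\le h$ with equality at $(u,v)$ is your concavity step, and its function $f(x,n)$ being non-increasing in $x$ is your observation that the directional derivative is affine and decreasing in the direction's density, so that $\tau\ge\sigma$ transfers nonpositivity from the $s$-direction to the $t$-direction. The boundary cases you flag are handled in the paper by capping the infinite code length $B$ at a large finite value, which is the coding-language counterpart of your one-sided-derivative bookkeeping.
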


\begin{proof}
Assume that $g(r_p, r_n, o, z) \geq q$.
Note that because of $u / (u + v) > o / (o + z)$ we must have $u > 0$ and $v < z$.
Assume that $0 < v$ and $u < o$.
Define 
\[
\begin{split}
	A = -\log \frac{u}{u + v}&,\ 
	B = -\log \frac{v}{u + v},\  \\
	C = -\log \frac{o - u}{o + z - u + v}&,\ 
	D = -\log \frac{z - v}{o + z - u + v}\quad.
\end{split}
\]
Define $h(x, y) = xA + yB + (o - x)C + (z - y)D$.
Since $h(x, y)$ is essentially the length of encoding with possibly sub-optimal codes, it follows that
\begin{equation}
\label{eq:bound}
	g(x, y, o, z) \leq h(x, y) \text{ and } g(u, v, o, z) = h(u, v)\quad.
\end{equation}

Define $f(x, n) = n(x(A - C) + (1 - x)(B - D))$.
We have the following identity between $h$ and $f$,
\[
\begin{split}
	h(x, y) - h(u, v) & = (x - u)(A - C) + (y - v)(B - D) \\
	& = f( (x - u) /(x + y - u - v) , x + y - u - v)\quad.
\end{split}
\]
Since $u / (u + v) > o / (o + z)$, we have $A < C$ and $D < B$,
which implies that $f(x, n)$ is non-increasing w.r.t to $x$ for any $n \geq 0$.

Let $m = s_p + s_n$ and define $w = s_p / m$. We have
\[
	0 \geq g(u, v, o, z) - g(r_p, r_n, o, z) \geq h(u, v) - h(r_p, r_n) = -f(w, -m) = f(w, m)\quad.
\]

This implies that $f(w, n) \leq 0$ for any $n \geq 0$. Since $t_p / (t_p + t_n) \geq w$,
we have $f(t_p / (t_p + t_n), t_p + t_n) \leq 0$. This implies that
\[
\begin{split}
	g(u + t_p, v + t_n, o, z) - g(u, v, o, z) & \leq h(u + t_p, v + t_n) - h(u, v) \\
	& = f(t_p / (t_p + t_n), t_p + t_n) \leq 0\quad.
\end{split}
\]

Assume now that $v = 0$, this will make $B = \infty$. However, we can repeat
the proof as long as $D < B$ and Eq.~\ref{eq:bound} is satisfied. This can be
done if we select $B$ high enough, say $B = \max g(x, y, o, z)$, where $0 \leq
x \leq o$ and $0 \leq y \leq z$. The argument is similar for case $u = o$.
\qed\end{proof}

\begin{proof}[of Proposition~\ref{prop:inequality}]
We will only show the case that there exist $a$ and $b$ such that $\cost{u, v} \leq \cost{i, j}$
and $a$ is a head border of $b$. The proofs for other cases are similar.

Assume that $i$ is not a head border of $j$. There exist
indices $1 \leq x < i \leq y \leq j$ such that $\freq{x, i - 1} \geq \freq{i, y}$.
Let
$r_p = \cnt{y + 1, j; p}$,
$r_n = \cnt{y + 1, j; n}$,
$s_p = \cnt{i, y ; p}$,
$s_n = \cnt{i, y ; n}$,
$t_p = \cnt{x, i - 1 ; p}$,
$t_n = \cnt{x, i - 1 ; n}$.
Then the conditions in Lemma~\ref{lem:inequality} are satisfied. Hence either
$\cost{y + 1, j} < \cost{i, j}$ or $\cost{x, j} \leq \cost{i, j}$.
Note that in the first case we must have $y + 1 \leq j$ since $\cost{i, j} \leq \cost{j + 1, j} = \ent{o, z}$.
We can now reset $a = y + 1$ or to $a = x$ if it is the second case, and repeat the argument.
Note that during each step we either decrease the score or move $i$ to the left. Since there
are only finite number of possible scores, this process will eventually stop and we have found
$a = i$ that is a head border of $b = j$.
\qed\end{proof}

\fi

\end{document}